\def \F {{\mathcal F}}
\def \V {{\mathcal V}}
\def \P {{\mathbb P}}
\def \R {{\mathbb R}}
\def \I {{\mathbf 1}}
\def \bF {{\mathbb F}}
\newtheorem{theorem}{Theorem}[section]
\newtheorem{lemma}[theorem]{Lemma}
\newtheorem{definition}[theorem]{Definition}
\newtheorem{remark}[theorem]{Remark}
\newtheorem{proposition}[theorem]{Proposition}
\newtheorem{ass}[theorem]{Assumption}
\newcommand{\ud}{\mathrm d}
\newcommand{\ds}{\displaystyle}
\newcommand{\esp}[2][\mathbb E] {#1\left[#2\right]}
\newcommand{\var}[2][{\rm Var}] {#1\left(#2\right)}
\newcommand{\condespf}[2][\F_t]       {\mathbb E\left.\left[#2\right|#1\right]}
\newcommand{\condespfo}[2][\F_0]       {\mathbb E\left.\left[#2\right|#1\right]}
\newcommand{\condespff}[2][\F_{\tau-}]       {\mathbb E\left.\left[#2\right|#1\right]}
\newcommand{\doleans}[1] {\mathcal E\left(#1\right)}
\newcommand{\condespfww}[2][\F_t^{W^1} \vee \F_t^{W^2}]       {\mathbb E\left.\left[#2\right|#1\right]}
\newcommand{\condespfw}[2][\F_t^W]       {\mathbb E\left.\left[#2\right|#1\right]}
\numberwithin{equation}{section}
\begin{document}

\title{Local Risk-Minimization under the Benchmark Approach}
\author{Francesca Biagini}
\address[Francesca Biagini]{Department of Mathematics, LMU,
Theresienstra{\ss}e, 39, 80333 Munich, Germany.}
\email{biagini@mathematik.uni-muenchen.de}.

\author{Alessandra Cretarola}
\address[Alessandra Cretarola]{Department of Mathematics and Computer Science,
University of Perugia, 
via Vanvitelli, 1, 06123 Perugia, Italy}. \email{alessandra.cretarola@dmi.unipg.it}.

\author{Eckhard Platen}
\address[Eckhard Platen]{University of Technology, Sydney,
Finance Discipline Group and Department of Mathematical
Sciences,
PO Box
123, Broadway, NSW, 2007, Australia.}
\email{Eckhard.Platen@uts.edu.au}.

\date{}

\begin{abstract}
We study the pricing and hedging of derivatives in incomplete financial markets
by considering the local risk-minimization method in the context of the benchmark approach, which will be called benchmarked local risk-minimization.
We show that the proposed benchmarked local risk-minimization allows to handle under extremely weak assumptions a much richer modeling world than the classical methodology.
\end{abstract}

\subjclass[2000]{91B28, 62P05, 62P20.}

\keywords{local risk-minimization, F\"ollmer-Schweizer decomposition, Galtchouck-Kunita-Watanabe decomposition, numéraire portfolio, benchmark approach, real world pricing.}

\maketitle

\section{Introduction}

\noindent The valuation and hedging of derivatives in incomplete financial markets is a frequently
studied problem in mathematical finance. The goal of this paper is to discuss the concept of local risk-minimization under the {\em benchmark approach} (see e.g.~\cite{f02},~\cite{fk08},~\cite{kk},~\cite{p05} and~\cite{ph}), a general modeling framework that only requires the existence of a benchmark, the numéraire portfolio. According to this approach, even under the absence of an equivalent local martingale measure (in short ELMM), contingent claims can be consistently evaluated by means of the so-called {\em real world pricing formula}, which generalizes standard valuation formulas, where the discounting factor is the numéraire portfolio and the pricing measure is the physical probability measure $\P$. Local risk-minimization under the benchmark approach has been also studied in~\cite{kp} in the case of jump-diffusion markets. 
In our paper the approach is more general, since we do not assume any specific market model for the primitive assets whose price processes may have jumps. We analyze the relationship between the classical local risk-minimization and the benchmark approach and revisit this quadratic hedging method in this modeling framework. This is rather different from~\cite{kp}, where the definition of cost process and optimal strategy are revisited in a Brownian setting and the square-integrability condition is dropped. In~\cite{kp} the cost is then interpreted in a different sense in terms of cost condition.
However, we should stress that the concept of risk (see Definition \ref{def:risk}) associated to an admissible strategy is well-defined only if the cost process is assumed to be square-integrable. 
Another difference between the two papers is also that we are considering a general setting with no specification of the asset dynamics, while in~\cite{kp} the Brownian setting of the underlying model plays a crucial role.\\
\indent First of all, we study the local risk-minimization method in the case when the benchmarked asset prices are $\P$-local 
martingales, which will correspond to {\em benchmarked risk-minimization}. This includes continuous market models (see Section \ref{sec:blm}) and a wide class of jump-diffusion models (see for example~\cite{ph}, Chapter 14, pages 513-549).
This property implies several advantages since in market models, where the discounted asset prices are given by $\P$-local martingales, the local risk-minimization method coincides with 
risk-minimization, as introduced originally in~\cite{fs86}. In the local risk-minimization approach, the optimal strategy is often calculated
by switching to a particular martingale measure $\widehat \P$ (the {\em minimal martingale measure}) and computing the Galtchouck-Kunita-Watanabe (in short GKW) decomposition of a benchmarked contingent claim $\hat H$ under $\widehat \P$. However, this method has two main disadvantages:
\begin{itemize}
\item[(i)] the minimal measure $\widehat \P$ may not exist, as it is often the case in the presence of jumps affecting the asset price dynamics; 
\item[(ii)] if $\widehat \P$ exists, the GKW decomposition of $\hat H$ under $\widehat \P$ must satisfy some particular integrability conditions under the real world probability measure $\P$ to give the F\"ollmer-Schweizer 
decomposition of $\hat H$.
\end{itemize} 
On the contrary, the risk-minimization 
approach that we discuss in this paper for the case of benchmarked market models, does not face the same technical difficulties as the local risk-minimization one. It formalizes in a straightforward mathematical way the 
economic intuition of risk and delivers always an optimal strategy for a 
given benchmarked contingent claim $\hat H \in L^2(\F_T,\P)$\footnote{The space $L^2(\F_T,\P)$ denotes the set of all $\F_T$-measurable random variables $H$ such that $\esp{H^2} = \int H^2\ud \P < \infty$.}, obtained by computing the GKW decomposition of $\hat H$ under $\P$. \\
\indent Furthermore, 
in this setting we establish a fundamental relationship between real world pricing and benchmarked risk-minimization. In market models, where the asset prices are given by $\P$-local martingales, by Theorem \ref{prop:fs} we will obtain the result that the benchmarked portfolio's value of the risk-minimizing strategy for $\hat H \in L^2(\F_T,\P)$ coincides with the real world pricing formula for $\hat H$. The benchmarked contingent claim $\hat H$ can be written as
\begin{equation} \label{Hdecomposition}
\hat H=\hat H_0 + \int_0^T \xi_u^{\hat H} \ud \hat S_u + L_T^{\hat H}\quad \P-{\rm a.s.},
\end{equation}
where $L^{\hat H}$ is a square-integrable $\P$-martingale with $L_0^{\hat H}=0$ strongly orthogonal\footnote{Two $\P$-local martingales $M$ and $N$ are called {\em strongly orthogonal} if their product $MN$ is a $\P$-local martingale.} to $\hat S$.
Decomposition \eqref{Hdecomposition} allows us to decompose every 
square-integrable benchmarked contingent claim as the sum of its {\em hedgeable
part} $\hat H^h$ and its
{\em unhedgeable part} $\hat H^u$ such that we can write
\begin{equation*} 
\hat H=\hat H^h+\hat H^u,
\end{equation*}
where
\begin{equation*} 
\hat H^h:=\hat H_0+\int_0^T\xi_u^{\hat H}\cdot \ud \hat
S_u
\end{equation*}
and
\begin{equation*} 
\hat H^u:=L_T^{\hat H}.
\end{equation*}
Here the notation $\int \xi^{\hat H} \cdot \ud \hat S$
characterizes the integral of the vector process $\xi^{\hat H}$ with respect
to the vector process $\hat S$ (see e.g.~\cite{mr}). 
Note that the benchmarked hedgeable part $\hat H^h$ can be replicated perfectly, i.e.
\begin{equation*} 
\hat U_{H^h}(t)=\condespf{\hat H^h}=\hat
H_0+\int_0^t\xi^{\hat H}_u\cdot \ud \hat S_u\,,
\end{equation*}
and $\xi^{\hat H}$ yields the fair strategy
for the self-financing replication of the hedgeable part of $\hat H$. The remaining benchmarked 
unhedgeable part can be diversified and will be covered through the cost process $C:=L^{\hat H}-\hat H_0$.
The connection between risk-minimization and real world pricing 
is then an important insight, which gives a clear reasoning for the pricing and hedging of contingent claims via real world pricing also in incomplete markets.\\
\indent A natural question concerns indeed the invariance of the risk-minimizing strategy under a change of numéraire. By~\cite{bp} this property always holds in the case of continuous assets prices. Here we show that this result is also true more generally: it is sufficient that
the orthogonal martingale structure is generated by continuous $\P$-(local) martingales. \\
\indent Then we also study the case when the benchmarked processes are 
$\P$-supermar-\linebreak tingales. 
In particular we consider a market model, where incompleteness is due to incomplete information. In this setting we show that a benchmarked locally risk-minimizing strategy can be determined by computing the predictable projection of the strategy in the completed market without any specification of the asset price dynamics (see Theorem \ref{th:incomplete}).
The proof we provide holds when the discounted asset prices are special semimartingales in $\mathcal S^2(\P)$\footnote{Given the Doob-Meyer decomposition
$$
X_t=X_0+M_t+V_t, \quad t \in [0,T],
$$
of a $\P$-semimartingale $X$ into a $\P$-local martingale $M=\{M_t, \ t\in [0,T]\}$ and an $\bF$-predictable process $V=\{V_t, \ t\in [0,T]\}$ of finite variation, we say that $X \in \mathcal S^2(\P)$ if the following integrability condition is satisfied
$$
\esp{X_0^2 + [X]_T + |V|_T^2}<\infty.
$$
Here $|V|=\{|V|_t, \ t\in [0,T]\}$ denotes the total variation of the process $V$.}, hence in particular for all benchmarked underlying assets in $\mathcal S^2(\P)$ by the Doob's decomposition. This extends the results of~\cite{fs}, where they assume continuity of the underlying prices processes. \\
\indent Finally, we provide some examples to illustrate how to compute the GKW 
decomposition in the minimal market model with random scaling, where there exists no ELMM, but the primitive assets are still $\P$-local martingales if benchmarked. \\
\indent The local risk-minimization method under the benchmark approach has acquired new importance for pricing and hedging in hybrid markets and insurance markets (see~\cite{b1} and~\cite{bw}). Since hybrid markets are intrinsically incomplete, perfect replication of contingent claims is not always possible and one has to apply one of the several methods for pricing and hedging in incomplete markets. Local risk-minimization appears to be one of the most suitable methods when the market is affected by orthogonal sources of randomness, such as the ones represented by mortality risk and catastrophic risks. The results of this paper provide a new simplified framework for applying benchmarked local risk-minimization.

\section{Financial Market} \label{finmarkt}

To describe a financial market in continuous time,
we introduce a probability space $(\Omega,\F,\P)$, a time horizon
$T \in (0,\infty)$ and a filtration $\bF:=(\F_t)_{0\leq t\leq T}$
that is assumed to satisfy $\F_t \subseteq \F$ for all $t \in
[0,T]$, as well as the usual hypotheses of completeness and
right-continuity and saturation by
all $\P$-null sets of $\F$.\\
In our market model we can find $d$ adapted, nonnegative primary
security account processes represented by (càdlàg) $\P$-semimartingales $S^j=\{S_t^j,\ t \in [0,T]\}$, $j \in \{1,2,\ldots,d\}$,
$d\in \{1,2,\ldots\}$. Additionally, the $0$-th security
account $S_t^0$ denotes the value of the adapted strictly positive 
savings account at time $t \in [0,T]$.
The $j$-th {\em primary security account} holds units of the $j$-th primary security plus its accumulated
dividends or interest payments, $j \in
\{1,2,\ldots, d\}$. 
In this setting, market participants can trade in order to reallocate
their wealth.
\begin{definition}
A {\em strategy} is a $(d+1)$-dimensional process
$\delta=\{\delta_t=(\delta_t^0,\delta_t^1,\ldots,\delta_t^d)^\top$,
$t\in [0,T]\}$,  
where for each $j \in
\{0,1,\ldots,d\}$, the process $\delta^j=\{\delta_t^j,\ t \in[0,T]\}$
is $\bF$-predictable and integrable with respect to $S^j=\{S_t^j,\ t
\in [0,T]\}$. 
\end{definition}
\noindent Here
$\delta_t^j$, $j \in
\{0,1,\ldots,d\}$, denotes the number of units of the $j$-th
security account that are held at time $t \ge 0$ in the
corresponding
{\em portfolio} $S^\delta=\{S_t^\delta,\ t \in
[0,T]\}$. Following~\cite{bp}, we define the 
value $S^\delta$ of this portfolio as given by a càdlàg optional process such that 
$$
S_{t-}^\delta:=\delta_t \cdot S_t=\sum_{j=0}^d \delta_t^j S_t^j, \quad t \in [0,T],
$$
where $S=\{S_t=(S_t^0,S_t^1,\ldots,S_t^d)^\top,\ t
\in [0,T]\}$.
A strategy $\delta$ and the corresponding
portfolio $\hat S^\delta$ are
said to be {\em self-financing} if
\begin{equation} \label{eq:self}
S_t^\delta=S_0^\delta+\int_0^t\delta_u \cdot \ud S_u, \quad t \in [0,T],
\end{equation}
where
$\delta=\{\delta_t=(\delta_t^0,\delta_t^1,\ldots,\delta_t^d)^\top,\
t\in [0,T]\}$. 
Note that the stochastic integral of the vector process $\delta$ with respect
to $S$ is well-defined because of our assumptions on $\delta$. Furthermore, $a^\top$ denotes the transpose of $a$. In general, we do not request strategies to be
self-financing. Denote by $\V_x^+$, ($\V_x$), the set of
all strictly positive, (nonnegative), finite, 
self-financing 
portfolios, with initial
capital $x >0$, ($x \ge 0$).
\begin{definition} \label{def:numport}
A portfolio
$S^{\delta_*} \in \V_1^+$ is called a {\em numéraire portfolio}, if 
any nonnegative portfolio $S^\delta \in \V_1^+$, when denominated in units of $S^{\delta_*}$,  
 forms a $\P$-supermartingale, that is,
\begin{equation}\label{supermprop}
\frac{S_t^\delta}{S_t^{\delta_*}} \ge \condespf{\frac{S_s^\delta}{S_s^{\delta_*}}},
\end{equation}
for all $0 \leq t \leq s\leq T$.
\end{definition}
\noindent 
To establish the modeling framework, we make the following
(extremely weak) key assumption, which is satisfied for almost all
models of practical interest, see e.g.~\cite{ph} and~\cite{kk}.
\begin{ass}\label{ass:numport}
There exists a
numéraire portfolio $S^{\delta_*}
\in \V_1^+$.
\end{ass}
\noindent From now on, let us choose the numéraire portfolio as {\em benchmark}. We call any security, when expressed in units of the numéraire portfolio, a benchmarked security and refer to this procedure as {\em benchmarking}. The benchmarked value of a portfolio $S^\delta$ is of particular interest and is given by the ratio
$$
\hat S_t^\delta=\frac{S_t^\delta}{S_t^{\delta_*}}
$$
for all $t \in [0,T]$. If a benchmarked price process is a $\P$-martingale, then we call it {\em fair}. In this case we would have equality in relationship \eqref{supermprop} of Definition \ref{def:numport}.\\
The benchmark approach developed in~\cite{kk},~\cite{l90} and~\cite{ph} uses the numéraire portfolio for derivative pricing without using equivalent martingale measures.
In portfolio optimization the numéraire portfolio, which is also the growth optimal portfolio, is in many other ways
the best performing self-financing portfolio, see~\cite{k56} and~\cite{m73}.\\
As shown in~\cite{ph}, jump-diffusion and It\^o process driven market models have a numéraire portfolio under very general assumptions, where benchmarked nonnegative portfolios turn out to be $\P$-local martingales and, thus, $\P$-supermartingales. In~\cite{kk} the question on the existence of a numéraire portfolio in a general semimartingale market is studied. \\
In order to guarantee the economic viability of
our framework, we check whether obvious
arbitrage opportunities are excluded.
A strong form of arbitrage would arise when a market
participant could generate strictly positive wealth from zero
initial capital via his or her nonnegative portfolio of total wealth.
\begin{definition}
A benchmarked nonnegative self-financing 
portfolio $\hat S^\delta$ 
is a {\em strong arbitrage} if
it starts with zero initial capital, that is $\hat S_0^\delta=0$, and
generates some strictly positive wealth with strictly positive
probability at a later time $t \in (0,T]$, that is
$\P(\hat S_t^\delta>0)>0$.
\end{definition}
\noindent Thanks to the supermartingale property \eqref{supermprop}, the existence of
the numéraire portfolio guarantees that strong arbitrage is
automatically excluded in the given general setting, see~\cite{ph}. However, some weaker forms of arbitrage may still exist. These would require to allow for negative portfolios of total wealth of those market participants who fully focus on exploiting such weaker forms of arbitrage, which is not possible in reality due to bankruptcy laws. This emphasizes the fact that an economically motivated notion of arbitrage should rely on nonnegative portfolios.\\
Within this paper, we consider a discounted European style contingent claim. Such a benchmarked claim $\hat H$ (expressed in
units of the benchmark) is given by the $\F_T$-measurable,
nonnegative random payoff $\hat H$
that is delivered at time $T$. We will here always assume that a benchmarked contingent claim
$\hat H$ belongs to
$L^2(\F_T,\P)$.\\
\noindent Given a benchmarked contingent claim $\hat H$, there are at least two
tasks that a potential seller of $\hat H$ may want to accomplish: the {\em pricing}
by assigning a value to $\hat H$ at times $t < T$; and the {\em hedging} by
covering as much as possible against potential losses arising from the uncertainty
of $\hat H$. If the market is complete, then there exists a
self-financing strategy $\delta$ whose terminal value $\hat S_T^\delta$
equals $\hat H$ with probability one, see~\cite{ph}.
More precisely, the {\em real world pricing formula}
\begin{equation} \label{realwpf}
\hat S_t^{\delta_H}=\condespf{\hat H}
\end{equation}
provides the description for the benchmarked fair portfolio at time $t \in [0,T]$, which is the least expensive $\P$-supermartingale that replicates the benchmarked payoff $\hat H$ if it admits a replicating self-financing
strategy $\delta^H$ with $\hat S_T^{\delta^H}=\hat H$. Here
$\hat S^{\delta_H}$ forms by definition a $\P$-martingale.
The benchmark approach allows other self-financing  hedge portfolios to exist for $\hat H$, see~\cite{ph}. However, these nonnegative portfolios are not $\P$-martingales and, as $\P$-supermartingales, more expensive than the $\P$-martingale $\hat S^{\delta_H}$ given in \eqref{realwpf}, see~\cite{ph}. \\
Completeness is a rather
delicate property that does not cover a large class of realistic
market models. Here we choose the (local) risk-minimization approach (see e.g.~\cite{fs},~\cite{fs86} and~\cite{s01}) to price non-hedgeable contingent claims. 

\noindent In this paper, we first  
investigate the case of benchmarked securities that represent $\P$-local martingales and study risk-minimization as originally introduced in~\cite{fs86}. 
We will see that this covers many cases in the context of the benchmark approach including all continuous financial market models,
a wide range of jump-diffusion
driven market models and cases like the minimal market model that do not have an equivalent risk neutral probability measure. Then we will study the general case when benchmarked securities are $\P$-supermartingales that are not necessarily $\P$-local martingales. As indicated earlier, we will refer to local risk-minimization under the benchmark approach as {\em  benchmarked local risk-minimization}.

\section{Local Risk-Minimization with Benchmarked Assets} \label{lrm}

Our aim is to investigate a concept of local risk-minimization similar to the one in~\cite{hps01} and~\cite{s01}, which used the savings account as reference unit. Here we use the numéraire portfolio as discounting factor and benchmark. 
The main feature of a local risk-minimization concept is the
fact that one insists on the replication requirement $\hat S_T^\delta=\hat H$. If $\hat H$
is not hedgeable, then this forces one to work with strategies that are
not self-financing and the aim becomes to minimize the resulting intrinsic risk or cost under
a suitable criterion. As we will see, rather natural and tractable are quadratic
hedging criteria, where we refer to~\cite{s01} and~\cite{hps01} for extensive surveys.\\
Important is the fact that there are realistic situations that we will cover, which would be excluded because a minimal martingale measure may not exist for the respective models. For example, in the case of local risk-minimization of financial derivatives based on insurance products, the minimal martingale measure may often not exist because of the presence of jumps in the underlying. \\
We recall that under Assumption \ref{ass:numport}, the benchmarked
value of any nonnegative, self-financing portfolio forms a $\P$-supermartingale, see \eqref{supermprop}. In
particular, the vector of the $d+1$ benchmarked primary security accounts
$\hat S=(\hat S^0,\hat S^1,\ldots, \hat S^d)^\top$ forms with each of its components a nonnegative
$\P$-{\it supermartingale}. By Theorem VII.12 of~\cite{dm2}, we know that the
vector process $\hat S$ has a unique decomposition of the form
\begin{equation} \label{supermartdeco}
\hat S_t=\hat S_0+M_t+V_t,\quad t \in [0,T],
\end{equation}
where $M$ is a vector $\P$-local martingale and $V$ is a right-continuous
$\bF$-predictable finite variation 
vector process with $M_0=V_0={\bf 0}$, with ${\bf 0}$ denoting the
$(d+1)$-dimensional null vector. This expresses the fact that every right-continuous $\P$-supermartingale is a special $\P$-semimartingale. 

\subsection{Benchmarked Local Martingales} \label{sec:blm}

We now discuss the case when benchmarked securities are $\P$-local martingales.
Let us assume that the vector of the $d+1$ discounted
primary security accounts $\displaystyle \frac{S}{S^0}=:X=\{X_t=(1,X_t^1,\ldots,X_t^d)^\top,\ t \in [0,T]\}$ 
is a {\em continuous} $\P$-semimartingale with canonical decomposition $X=X_0+M^X+A^X$. The processes $M^X=\{M_t^{X}:\ t\in[0,T]\}$ and $A^X=\{A_t^{X}:\ t\in[0,T]\}$ are both $\R^{d+1}$-valued, continuous and null at $0$. Moreover, $M^X$ is a vector $\P$-local martingale and $A^X$ is an
adapted, finite variation vector process. The bracket process $\langle M^X\rangle$ of $M^X$ is the adapted, continuous $(d+1) \times (d+1)$-matrix-valued process with components $\langle M^X\rangle_t^{i,j}=\langle (M^X)^i, (M^X)^j\rangle_t$ denoting covariation for $i,j=0,1, \ldots, d$ and $t \in [0,T]$.\\
Since Assumption \ref{ass:numport} is in force, Theorem 3.4 of~\cite{hs10} ensures that the {\em structure condition}\footnote{We say that $X$ satisfies the {\em structure condition} if $A^X$ is absolutely continuous with respect to $\langle M^X \rangle$, in the sense that there exists an $\bF$-predictable process $\hat \lambda=\{\hat \lambda_t, t \in [0,T]\}$ such that $A^X= \int \ud \langle M^X \rangle \hat \lambda$, i.e.
$
(A_t^X)^{i}=\sum_{j=0}^d \int_0^t \hat \lambda_u^j \ud \langle M^X\rangle_u^{ij}$, for $i \in \{0,\ldots,d\}$ and $t \in [0,T]$,  
and the {\em mean-variance tradeoff process}
$
\hat K_t=\int_0^t \hat \lambda_u^\top \ud \langle M^X\rangle_u\hat \lambda_u
$
is finite $\P$-a.s. for each $t \in [0,T]$.} is satisfied and the discounted
numéraire portfolio $\bar S_t^{\delta_*}=\frac{S_t^{\delta_*}}{S_t^0}$ at any time $t$ is given by
$$
\bar S_t^{\delta_*}=\frac{1}{\hat Z_t},\quad t \in [0,T],
$$
where the process $\hat Z$ corresponds to the stochastic exponential
$$
\hat Z_t=\doleans{-\hat \lambda \cdot M^X}_t= \exp \left(-\hat \lambda \cdot M_t^X - \frac{1}{2} \hat K_t\right), \quad t \in [0,T],
$$
which is then well-defined and a strictly positive $\P$-local martingale. Via It\^o's product rule, it is easy to check that the vector process $\hat S$ of benchmarked primary security accounts is a $\P$-local martingale, and thus, a $\P$-supermartingale. Indeed, since $X_t=X_0+M_t^X+\int_0^t \hat \lambda_s \ud \langle M^X\rangle_s$, we have
\begin{align*}
\ud \hat S_t & = \ud (X_t \hat Z_t) = \hat Z_t \ud X_t + X_t \ud \hat Z_t + \ud \langle X,\hat Z \rangle_t\\
& = \hat Z_t(1-X_t\hat \lambda_t) \ud M_t^X, \quad t \in [0,T].
\end{align*} 
This implies that whenever we consider continuous primary security account processes, they are $\P$-local martingales when expressed in units of the numéraire portfolio.\\
In the general case when $S_t$ can have jumps, it is not possible to provide an analogous explicit description of the numéraire portfolio $S_t^{\delta_*}$ or, more precisely, its generating strategy $\delta_*$. An implicit description can be found in~\cite{kk}, Theorem 3.15, or more generally in~\cite{gk}, Theorem 3.2 and Corollary 3.2. In both cases, $\delta_*$ can be obtained by pointwise maximization of a function that is given explicitly in terms of semimartingale characteristics. If $S$ is discontinuous, such a pointwise maximizer is only defined implicitly and neither of the above descriptions provides explicit expressions for $\delta_*$.\\
However, a wide class of jump-diffusion market models is driven by primary security account processes that turn out to be, when expressed in units of the numéraire portfolio, $\P$-local martingales, see e.g.~\cite{ph}, Chapter 14.
For example, this is the case in jump-diffusion markets, that is, when security price processes exhibit intensity based jumps due to event risk, see~\cite{ph}, Chapter 14, page 513. These results allow us to consider below  risk-minimization in the case when the benchmarked assets are given by $\P$-local martingales.
 
\subsubsection{Risk-Minimization with Benchmarked Assets}

\noindent Since at this stage we refer to the case where benchmarked securities represent $\P$-local martingales (i.e. we assume $V \equiv {\bf 0}$ for all $t \in [0,T]$ in \eqref{supermartdeco}), we study risk-minimization as originally introduced in~\cite{fs86} under the benchmark approach, that is, benchmarked risk-minimization. In particular, since we are considering a (general) discounting factor (different from the usual money market account), we follow the approach of~\cite{bp} for local risk-minimization under a given numéraire.
\begin{definition}
An $L^2$-{\em admissible strategy} is any $\R^{d+1}$-valued $\bF$-predictable vector process
$\delta=\{\delta_t=(\delta_t^0,\delta_t^1,\ldots,$ $\delta_t^d)^\top,\
t\in [0,T]\}$ 
such that
\begin{enumerate}
\item[(i)] the associated portfolio $\hat S^\delta$ is a square-integrable stochastic process whose left-limit is equal to $\hat S^\delta_{t-}=\delta_t \cdot \hat S_t$, 
\item[(ii)] the stochastic integral $\int \delta \cdot \ud \hat S$ 
is such that
\begin{equation}\label{admissible}
\esp{\int_0^T\delta_u^\top\ud [\hat S]_u \delta_u}<\infty.
\end{equation}
Here $[\hat S]=([\hat S^i,\hat S^j])_{i,j=1,\ldots,d}$ denotes the matrix-valued optional covariance process of $\hat S$.
\end{enumerate}
\end{definition}
\noindent Recall that the market may be not complete. We also admit strategies 
that are not self-financing and may generate benchmarked profits or
losses over time.
\begin{definition} For any $L^2$-admissible strategy
$\delta$, the {\em benchmarked cost process} $\hat
C^\delta$ is defined by
\begin{equation}\label{profloss}
\hat C_t^\delta:=\hat S_t^\delta-\int_0^t \delta_u \cdot \ud
\hat S_u
,\quad t \in [0,T].
\end{equation}
\end{definition}
\noindent Here $\hat C_t^\delta$ describes the total costs incurred by $\delta$ over the interval $[0,t]$. 
\begin{definition} \label{def:risk}
For an $L^2$-admissible strategy $\delta$, the corresponding {\em
risk} at time $t$ is defined by
$$
\hat R_t^\delta:=\condespf{\left(\hat C_T^\delta-\hat C_t^\delta
\right)^2}, \quad t \in [0,T],
$$
where the benchmarked cost process $\hat C^\delta$, given
in \eqref{profloss}, is assumed to be square-integrable.
\end{definition}
\noindent If $\hat C^\delta$ is constant, then it equals zero and the strategy is self-financing. 
Our goal is to find an $L^2$-admissible strategy
$\delta$, which minimizes the associated risk measured by the
fluctuations of its benchmarked cost process in a suitable sense.
\begin{definition} \label{optimalstrategy}
Given a benchmarked contingent claim $\hat H \in L^2(\F_T,\P)$,
an $L^2$-admissible strategy $\delta$ is said to be {\em
benchmarked
risk-minimizing}
if the following conditions hold:
\begin{itemize}
\item[(i)] $\hat S_T^\delta=\hat H,\ \P$-a.s.;
\item[(ii)] for any $L^2$-admissible strategy $\tilde \delta$ such that $\hat S_T^{\tilde \delta}=\hat S_T^\delta$ $\P$-a.s., we have
$$
\hat R_t^\delta \leq \hat R_t^{\tilde \delta} \quad \P-\mbox{a.s.}\ \mbox{for}\ \mbox{every}\ t \in [0,T].
$$
\end{itemize}
\end{definition}
\begin{lemma} \label{lem:msf}
The benchmarked cost process $\hat C^\delta$ defined in \eqref{profloss} associated to a benchmarked 
risk-minimizing strategy $\delta$ is a $\P$-martingale for all $t \in [0,T]$.
\end{lemma}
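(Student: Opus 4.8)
The plan is to argue by a perturbation (variational) argument in the spirit of the classical F\"ollmer--Schweizer theory, exploiting the special structure of the benchmark. The crucial observation is that the num\'eraire portfolio provides a \emph{free direction}: since $\hat S^{\delta_*}\equiv 1$ and, by the num\'eraire-invariance of the self-financing property established in~\cite{bp}, $\hat S^{\delta_*}=\hat S_0^{\delta_*}+\int_0^\cdot \delta_* \cdot \ud \hat S$, one has $\int_0^t \delta_* \cdot \ud \hat S = 1-1 = 0$ for every $t\in[0,T]$. Hence adding units of $S^{\delta_*}$ to a strategy shifts its benchmarked value \emph{without} altering the gains-from-trade integral $\int \delta \cdot \ud \hat S$, and therefore shifts the cost process in \eqref{profloss} by exactly the same amount.

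First I would fix $t_0 \in [0,T)$ and a bounded $\F_{t_0}$-measurable random variable $\gamma$, and define a competitor $\tilde\delta$ that coincides with $\delta$ except that it holds $\gamma$ additional units of the num\'eraire portfolio over the set $[t_0,T)$, i.e.\ $\tilde\delta_t = \delta_t + \gamma\,\I_{[t_0,T)}(t)\,\delta_{*,t}$, where $\delta_{*,t}$ is the time-$t$ holding vector of $S^{\delta_*}$. By the free-direction property this leaves the terminal value unchanged, $\hat S_T^{\tilde\delta} = \hat S_T^\delta = \hat H$, so $\tilde\delta$ is an admissible competitor in the sense of Definition~\ref{optimalstrategy}, while its cost satisfies $\hat C_t^{\tilde\delta} = \hat C_t^\delta + \gamma\,\I_{[t_0,T)}(t)$; in particular $\hat C_T^{\tilde\delta} - \hat C_{t_0}^{\tilde\delta} = (\hat C_T^\delta - \hat C_{t_0}^\delta) - \gamma$. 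Moreover $L^2$-admissibility is preserved essentially for free: since $\hat S^{\delta_*}\equiv 1$ has vanishing quadratic variation, $\int_0^T \delta_{*}^\top \ud[\hat S]\,\delta_{*} = 0$, so by the Kunita--Watanabe inequality the perturbation does not change the integral in \eqref{admissible}.

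Next I would expand the risk of the competitor at $t_0$. Using that $\gamma$ is $\F_{t_0}$-measurable,
\begin{equation*}
\hat R_{t_0}^{\tilde\delta} = \hat R_{t_0}^\delta - 2\gamma\,\mathbb E\left[\hat C_T^\delta - \hat C_{t_0}^\delta \,\middle|\, \F_{t_0}\right] + \gamma^2 .
\end{equation*}
Since $\delta$ is benchmarked risk-minimizing we have $\hat R_{t_0}^\delta \le \hat R_{t_0}^{\tilde\delta}$ $\P$-a.s., which forces $0 \le -2\gamma\,\mathbb E[\hat C_T^\delta - \hat C_{t_0}^\delta \mid \F_{t_0}] + \gamma^2$ for every such $\gamma$. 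Choosing $\gamma$ proportional to a truncation of $\mathbb E[\hat C_T^\delta - \hat C_{t_0}^\delta \mid \F_{t_0}]$ and letting the proportionality constant tend to zero yields $\mathbb E[\hat C_T^\delta - \hat C_{t_0}^\delta \mid \F_{t_0}] = 0$ $\P$-a.s. As $t_0 \in [0,T)$ is arbitrary, $\hat C_{t_0}^\delta = \mathbb E[\hat C_T^\delta \mid \F_{t_0}]$ for all $t_0$; since $\hat C_T^\delta$ is square-integrable (as assumed in Definition~\ref{def:risk}), the tower property then gives the $\P$-martingale property on all of $[0,T]$.

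I expect the main obstacle to be the bookkeeping that makes the competitor legitimate, rather than the final optimization, which is routine. One must verify that $t \mapsto \gamma\,\I_{[t_0,T)}(t)\,\delta_{*,t}$ is $\bF$-predictable --- the only delicate point being the behaviour at the endpoints, where the choice of the half-open set $[t_0,T)$ is what simultaneously shifts $\hat C_{t_0}^\delta$ by $\gamma$ and keeps the terminal value intact, and where the left endpoint $t_0$ requires the standard approximation of $\gamma$ together with right-continuity of $\hat C^\delta$ --- and one must confirm that $\tilde\delta$ is again $L^2$-admissible, which by the previous paragraph reduces to the admissibility of $\delta$ itself. Grounding the perturbation in the num\'eraire portfolio is not merely convenient but necessary, since a competitor's value is determined by its holdings and cannot be shifted in isolation; the identity $\int \delta_* \cdot \ud \hat S = 0$ is precisely what realizes the value shift with no gains-from-trade contribution.
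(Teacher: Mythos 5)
Your argument is correct in substance and takes a genuinely different route from the paper's. The paper adapts Schweizer's Lemma 2.3: it constructs a competitor $\tilde\delta$ that agrees with $\delta$ on $[0,t_0)$ and is continued by a process $\eta$, asserted to exist, whose value process equals $\condespf{\hat S_T^\delta-\int_t^T\delta_s\cdot\ud\hat S_s}$ on $[t_0,T]$, so that $\hat C^{\tilde\delta}_{t_0}=\mathbb E[\hat C_T^\delta\,|\,\F_{t_0}]$; it then expands $\bigl(\hat C_T^\delta-\hat C_{t_0}^\delta\bigr)^2$ and kills the cross term by an orthogonality argument for the residual cost of $\tilde\delta$. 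You instead perturb only in the num\'eraire direction: since $\hat S^{\delta_*}\equiv 1$ and, by the num\'eraire invariance built into the value convention $\hat S^\delta_{t-}=\delta_t\cdot\hat S_t$ of~\cite{bp}, $\int\delta_*\cdot\ud\hat S\equiv 0$, adding $\gamma$ units of $S^{\delta_*}$ after $t_0$ shifts the cost with no gains-from-trade contribution, and optimality at $t_0$ reduces to the pointwise scalar inequality $0\le-2\gamma m+\gamma^2$ with $m=\mathbb E[\hat C_T^\delta-\hat C_{t_0}^\delta\,|\,\F_{t_0}]$, whence $m=0$ by your truncation-and-limit choice of $\gamma$. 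What your route buys: it sidesteps the two weakest points of the paper's proof, namely the unverified existence of the continuation $\eta$ and the final orthogonality step (which the paper justifies only under the additional hypothesis ``if $\tilde\delta$ is also risk-minimizing''), at the price of exploiting structure specific to the benchmark setting: the constant asset $\hat S^{\delta_*}\equiv 1$ plays exactly the role of the riskless component in the classical theory. Your admissibility check via $\int\delta_*^\top\ud[\hat S]\,\delta_*=\bigl[\int\delta_*\cdot\ud\hat S\bigr]=0$ and Kunita--Watanabe for the cross term is also sound.

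One piece of bookkeeping needs fixing, and it is precisely at the endpoints you flagged: your indicator sits on the wrong object. With the convention $\hat S^{\tilde\delta}_{t-}=\tilde\delta_t\cdot\hat S_t$, if the holdings are perturbed by $\gamma\,\I_{[t_0,T)}(t)\,\delta_{*,t}$, then no c\`adl\`ag value process can have left limits $\hat S^\delta_{t-}+\gamma\,\I_{[t_0,T)}(t)$: the left limit of any c\`adl\`ag shift is of the form $\gamma\,\I_{(t_0,T]}$, so the requirement fails at both $t=t_0$ and $t=T$. The consistent choice is to perturb the holdings on the half-open set $(t_0,T]$, i.e.\ $\tilde\delta_t=\delta_t+\gamma\,\I_{(t_0,T]}(t)\,\delta_{*,t}$ (predictable, being left-continuous and adapted since $\gamma$ is $\F_{t_0}$-measurable), and to take the value process $\hat S^{\tilde\delta}:=\hat S^\delta+\gamma\,\I_{[t_0,T)}$. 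Then $\hat S^{\tilde\delta}_{t-}=\tilde\delta_t\cdot\hat S_t$ because $\delta_{*,t}\cdot\hat S_t=\hat S^{\delta_*}_{t-}=1$, the terminal value is untouched, and $\hat C^{\tilde\delta}=\hat C^\delta+\gamma\,\I_{[t_0,T)}$ exactly as your risk expansion uses; no approximation of $\gamma$ or appeal to right-continuity of $\hat C^\delta$ is then needed. With this one-line correction your proof is complete.
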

\noindent For the proof of Lemma \ref{lem:msf}, we refer to Section \ref{tecpr} in the Appendix.
Hence benchmarked risk-minimizing strategies are ``self-financing on average''.
We will see, to find a benchmarked 
risk-minimizing strategy corresponds
to finding 
a suitable decomposition of the benchmarked claim adapted to this setting. Let $\mathcal M_0^2(\P)$
be the space of all square-integrable $\P$-martingales starting at null at the initial time. 
\begin{theorem} \label{prop:fs}
Every benchmarked contingent claim $\hat H \in L^2(\F_T,\P)$ admits a unique benchmarked risk-minimizing strategy $\delta$ with portfolio value $\hat S^\delta$ and benchmarked cost process $\hat C^\delta$, given by
\begin{align*}
\delta & = \delta^{\hat H},\\
\hat S_t^\delta & = \hat H_t=\condespf{\hat H}, \quad t \in [0,T],\\
\hat C^\delta & = \hat H_0
+L^{\hat H},
\end{align*}
where $\delta^{\hat H}$ and $L^{\hat H}$ are provided by the Galtchouk-Kunita-Watanabe decomposition of $\hat H$, i.e.
\begin{equation} \label{fsdecomp}
\hat H=\hat H_0+\int_0^T\delta_u^{\hat H}\cdot \ud \hat S_u+L_T^{\hat H},\quad \P-{\rm a.s.}
\end{equation}
with $\hat H_0\in \R$, where $\delta^{\hat H}$ is an $\bF$-predictable vector process  satisfying the integrability condition \eqref{admissible} and $L^{\hat H} \in \mathcal M_0^2(\P)$ is strongly orthogonal to each component of $\hat S$.
\end{theorem}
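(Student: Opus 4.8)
The plan is to reduce the statement to the classical martingale case of Föllmer--Schweizer risk-minimization, the only genuine novelty being the benchmark discounting, and to build the optimal strategy explicitly from the Galtchouk--Kunita--Watanabe (GKW) decomposition of the closed martingale generated by $\hat H$. First I would set $\hat H_t := \condespf{\hat H}$, which is a square-integrable $\P$-martingale since $\hat H \in L^2(\F_T,\P)$. Because $\hat S$ is a vector $\P$-local martingale, the stochastic integrals $\int\delta\cdot\ud\hat S$ satisfying \eqref{admissible} form a stable, closed subspace of $\mathcal M_0^2(\P)$, so the Kunita--Watanabe projection of $\hat H_\cdot - \hat H_0$ onto this subspace yields the decomposition \eqref{fsdecomp} with a predictable integrand $\delta^{\hat H}$ obeying \eqref{admissible} and a remainder $L^{\hat H}\in\mathcal M_0^2(\P)$ strongly orthogonal to each component of $\hat S$. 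I would then assemble the strategy so that its trading gains are $\int\delta^{\hat H}\cdot\ud\hat S$ and its benchmarked value equals $\hat H_\cdot$; the extra degree of freedom needed to match the value process without altering the gains is supplied by the num\'eraire portfolio direction, whose benchmarked value is constantly one and whose stochastic integral against $\hat S$ vanishes, following the conventions of~\cite{bp}.

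Next I would check that this $\delta$ is $L^2$-admissible and meets Definition~\ref{optimalstrategy}. Square-integrability of the value process follows from that of $\hat H_\cdot$ and $L^{\hat H}$, while the integrability \eqref{admissible} is exactly the defining property of the GKW integrand; the replication condition holds since $\hat S_T^\delta = \hat H_T = \hat H$ $\P$-a.s. A direct computation of \eqref{profloss} then gives $\hat C_t^\delta = \hat H_t - \int_0^t\delta_u^{\hat H}\cdot\ud\hat S_u = \hat H_0 + L_t^{\hat H}$, which is a $\P$-martingale, consistent with Lemma~\ref{lem:msf}.

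The heart of the argument is optimality. For any competing $L^2$-admissible $\tilde\delta$ with $\hat S_T^{\tilde\delta} = \hat H$, I would use \eqref{fsdecomp} evaluated between $t$ and $T$ to write
\begin{equation*}
\hat C_T^{\tilde\delta} - \hat C_t^{\tilde\delta} = \bigl(\hat H_t - \hat S_t^{\tilde\delta}\bigr) + \int_t^T(\delta_u^{\hat H} - \tilde\delta_u)\cdot\ud\hat S_u + \bigl(L_T^{\hat H} - L_t^{\hat H}\bigr),
\end{equation*}
and then take conditional expectations of the square. The first summand is $\F_t$-measurable, the stochastic integral is a centred square-integrable martingale increment (by admissibility), and $L_T^{\hat H} - L_t^{\hat H}$ is a martingale increment strongly orthogonal to $\hat S$, so all cross terms vanish and
\begin{equation*}
\hat R_t^{\tilde\delta} = \bigl(\hat H_t - \hat S_t^{\tilde\delta}\bigr)^2 + \condespf{\Bigl(\int_t^T(\delta_u^{\hat H}-\tilde\delta_u)\cdot\ud\hat S_u\Bigr)^2} + \condespf{\bigl(L_T^{\hat H}-L_t^{\hat H}\bigr)^2}.
\end{equation*}
The last term is independent of $\tilde\delta$, while the first two are nonnegative and vanish simultaneously for the candidate $\delta$, which satisfies $\hat S_t^\delta = \hat H_t$ and trades according to $\delta^{\hat H}$; this proves $\hat R_t^\delta \le \hat R_t^{\tilde\delta}$ for every $t$. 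For uniqueness I would observe that any risk-minimizer must force both nonnegative terms to zero at every $t$, whence $\hat S_t^{\tilde\delta} = \hat H_t$ and, by the It\^o isometry for vector integrals, $\tilde\delta = \delta^{\hat H}$ up to $\ud[\hat S]\times\P$-null sets.

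I expect the main obstacle to lie not in the optimality computation, which is a clean orthogonality argument, but in the construction and justification of the GKW decomposition \eqref{fsdecomp} for the vector $\P$-local martingale $\hat S$ whose components may have jumps: one must ensure that the space of admissible gains is closed in $\mathcal M_0^2(\P)$ and that the integrand can be assembled, via the num\'eraire direction, into a genuine $L^2$-admissible strategy with the prescribed benchmarked value. The strong orthogonality of $L^{\hat H}$ to each component $\hat S^i$ and the square-integrability bookkeeping underlying the vanishing of the cross terms are the technical points requiring the most care.
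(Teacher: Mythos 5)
Your proof is correct and is essentially the paper's own argument written out in full: the paper disposes of the theorem by citing Theorem 2.4 of \cite{s01} together with Lemma \ref{lem:msf}, and your construction --- the GKW projection of the martingale $\condespf{\hat H}$ onto the closed stable subspace of vector stochastic integrals of the $\P$-local martingale $\hat S$, value-matching via the num\'eraire-portfolio direction as in \cite{bp}, and the F\"ollmer--Sondermann orthogonality decomposition of the conditional risk --- is precisely the standard proof behind that citation, adapted to the benchmarked setting. The only caveat, cosmetic relative to the paper's own level of detail, is that in the vector case the It\^o isometry identifies integrands only modulo the degenerate directions of $\ud [\hat S]$ (the num\'eraire direction $\delta_*$ itself is such a direction), so uniqueness is really of the gains and value processes, the strategy then being pinned down by the left-limit condition $\hat S_{t-}^{\delta}=\delta_t\cdot\hat S_t$ --- the same identification implicit in the paper's uniqueness statement.
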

\begin{proof}
 The proof follows from Theorem 2.4 of~\cite{s01} and Lemma \ref{lem:msf}.
\end{proof}
\noindent 
Thus, the problem of minimizing risk is reduced to finding the representation \eqref{fsdecomp}.
\noindent A natural question is whether the benchmarked risk-minimizing strategy is invariant under a change of numéraire. We address this issue to Section \ref{sec:inv}. 

\subsection{Relationship to Real World Pricing}

\begin{definition} \label{def:hedge}
We say that a nonnegative benchmarked contingent claim $\hat H \in
L^2(\F_T,\P)$ is {\em hedgeable} if there exists an
$L^2$-admissible self-financing strategy
$\xi^{\hat H}=\{\xi_t^{\hat H}=(\xi^{\hat H,1}_t,\ldots,$      
$\xi^{\hat H,d}_t)^\top,$ $\ t \in [0,T]\}$ such that
\begin{equation*} \label{hedgerepr}
\hat H=\hat H_0+\int_0^T\xi_u^{\hat H}\cdot \ud \hat S_u.
\end{equation*}
\end{definition}
\noindent 
Decomposition \eqref{fsdecomp} and Definition
\ref{def:hedge} allow us to decompose every nonnegative,
square-integrable benchmarked contingent claim as the sum of its {\em hedgeable
part} $\hat H^h$ and its
{\em unhedgeable part} $\hat H^u$ such that we can write
\begin{equation} \label{eq:hdecomp}
\hat H=\hat H^h+\hat H^u,
\end{equation}
where
\begin{equation*} \label{eq:hhedge}
\hat H^h:=\hat H_0+\int_0^T\xi_u^{\hat H}\cdot \ud \hat
S_u
\end{equation*}
and
\begin{equation*} \label{eq:hunhedge}
\hat H^u:=L_T^{\hat H}.
\end{equation*}
Recall that $L^{\hat H}=\{L_t^{\hat H},t \in [0,T]\}$ is a $\P$-martingale in $\mathcal M_0^2(\P)$, strongly
orthogonal to each component of $\hat S$.
\noindent There is a close relationship between benchmarked risk-minimization and real world pricing, as we will see now.
Let us apply the real world pricing formula \eqref{realwpf}
to the benchmarked contingent claim
$\hat H$ in order to get its benchmarked
fair price
$\hat U_{H}(t)$ at time $t$. Recall, by its martingale property that the benchmarked fair price is the best forecast of its future
benchmarked prices. Due to the supermartingale property \eqref{supermprop}, it  follows that we characterize, when using the real world pricing formula \eqref{realwpf} for obtaining
the fair price of the hedgeable part, the
least expensive replicating portfolio for $\hat H^h$ by taking
the conditional
expectation $\condespf{\hat H^h}$ under the real world probability
measure $\P$. Then by \eqref{eq:hdecomp} we have
\begin{equation*} \label{uht}
\hat U_{H}(t)=\condespf{\hat
H}=\condespf{\hat H^h}+\condespf{\hat H^u}=\hat U_{H^h}(t)+\hat
U_{H^u}(t),
\end{equation*}
for every $t \in
[0,T]$.
Note that the benchmarked hedgeable part $\hat H^h$ can be replicated perfectly, i.e.
\begin{equation*} \label{uhh}
\hat U_{H^h}(t)=\condespf{\hat H^h}=\hat
H_0+\int_0^t\xi^{\hat H}_u\cdot \ud \hat S_u.
\end{equation*}
In particular, for $t=0$ one has for the benchmarked hedgeable part
$$
\hat U_{H^h}(0)=\condespfo{\hat H^h}=\hat H_0.
$$
On the other hand, we have for the benchmarked unhedgeable part
$$
\hat U_{H^u}(t)=\condespf{\hat H^u}=L_t^{\hat H}
$$
with
$$
\hat U_{H^u}(0)=0.
$$
Consequently, for the nonnegative benchmarked payoff $\hat H$, its benchmarked fair price
$\hat U_{H}(0)$ at time $t=0$, is given by
$$
\hat U_{H}(0)=\hat U_{H^h}(0)+\hat U_{H^u}(0)=\condespfo{\hat
H^h}+\condespfo{\hat H^u}=\hat H_0.
$$
The real world pricing formula \eqref{realwpf} appears in the form of a conditional expectation
and, thus, as a projection in a least squares sense. More
precisely, the benchmarked fair price $\hat U_{H}(0)$ can be
interpreted as the least squares projection of $\hat H$ into the space of
$\F_0$-measurable benchmarked values. Note that the
benchmarked fair price $\hat U_{H^u}(0)$ of the unhedgeable part $\hat H^u=L_T^{\hat H}$
is zero at time $t=0$. Recall that the benchmarked hedgeable part is priced at time $t=0$ such that the minimal possible price, the fair price, results.
Viewed from time $t=0$ the benchmarked cost $\hat C_T^\delta=\hat H_0+\hat L_T^{\hat H}$, see Theorem \ref{prop:fs},
has then 
minimal variance $\var{L_T^{\hat H}}$.
This means that the application of the real world pricing formula to a benchmarked payoff at time $t=0$
leaves its benchmarked unhedgeable part {\em totally untouched}. This is
reasonable because any extra trading could only create unnecessary
uncertainty and potential additional benchmarked costs. Of course, once the benchmarked fair price is used to establish a hedge portfolio, a benchmarked cost emerges according to Theorem \ref{prop:fs} if there was an unhedgeable part in the benchmarked contingent claim.
The following practically important insight is worth mentioning:
\begin{remark}
From a large financial institution's point of view,
the benchmarked profits {\em \&} losses due to the optimal costs in its derivative book have 
minimal variance when evaluated under real world pricing and viewed at time $t=0$. If they are large in number and {\em independent}, then the Law of Large Numbers
reduces asymptotically the variance of the benchmarked pooled profit {\em \&} loss to zero and, thus, its value to zero. \\
Obviously, requesting from clients higher prices
than fair prices would make the bank less competitive. On the other hand, charging lower prices than fair prices would make
it unsustainable in the long run because it would suffer on average a loss. In this sense fair pricing of unhedgeable
claims is most natural and yields economically correct prices. Accordingly, benchmarked
risk-minimization is a
very natural risk management strategy for pricing and hedging. Moreover, it is mathematically convenient and for many models rather tractable when using the GKW-decomposition.
\end{remark}
\noindent With the above notation, we obtain by Theorem \ref{prop:fs} 
and 
\eqref{fsdecomp}
for the benchmarked payoff $\hat H \in L^2(\F_T,\P)$  the following decomposition:
\begin{equation*}
\hat H=\hat U_{H^h}(0)+\int_0^T\xi^{\hat H}_u\cdot \ud \hat
S_u+\hat U_{H^u}(T).
\end{equation*}
Since $\hat U_{H^u}(T)=\hat H^u=L_T^{\hat H}$, it follows
\begin{equation*} \label{eq:realworld}
\hat H=\hat U_{H^h}(0)+\int_0^T\xi^{\hat H}_u\cdot \ud \hat
S_u+L_T^{\hat H}.
\end{equation*}
This allows us to summarize the relationship between benchmarked 
risk-minimiza-\linebreak tion and real world pricing.
In our setting $\hat H \in L^2(\F_T,\P)$ admits a benchmarked risk-minimizing strategy and the decomposition for $\hat H$, provided by the real world
pricing formula, coincides with the decomposition \eqref{fsdecomp},
where $\xi^{\hat H}$ yields the fair strategy
for the self-financing replication of the hedgeable part of $\hat H$.
The remaining benchmarked unhedgeable part, given by the benchmarked cost process $ L^{\hat H}$, can be diversified. Note that diversification takes place under the real world probability measure and {\it not} under some putative risk neutral measure. This is an important insight, which gives a clear reasoning for the pricing and hedging of contingent claims via real world pricing in incomplete markets.

\subsection{Local Risk-Minimization with Benchmarked Assets} \label{ss:lrm} 

We now consider the general situation, where the vector of the $d+1$ benchmarked primary security accounts $\hat S=(\hat S^0,\hat S^1,\ldots,\hat S^d)^\top$ forms with each of its components a locally square-integrable nonnegative $\P$-supermartingale with $V \neq 0$ in decomposition \eqref{supermartdeco}, and hence a special $\P$-semimartingale. In view of Proposition 3.1 in~\cite{s01}, Definition \ref{def:risk} does not hold in this non-martingale case due to a compatibility problem. Indeed as observed in~\cite{s01}, at any time $t$ we minimize $\hat R_t^\delta$ over all admissible continuations from $t$ on and obtain a continuation which is optimal when viewed in $t$ only. But for $s<t$, the $s$-optimal continuation from $s$ onward highlights what to do on the whole interval $(s,T] \supset (t,T]$ and this may be different from what the $t$-optimal continuation from $t$ on prescribes. However, it is possible to characterize benchmarked pseudo-locally risk-minimizing strategies\footnote{The original definition of a {\it locally risk-minimizing} strategy is given in~\cite{s01} and formalizes the intuitive idea that changing an optimal
strategy over a small time interval increases the risk, at least asymptotically. Since it is a rather technical definition, it has been introduced the concept of a {\it pseudo-locally risk-minimizing} strategy that is
both easier to find and to characterize, as Proposition \ref{pr:mainres} will show in the following. Moreover, in the one-dimensional case and if $\hat S$ is sufficiently well-behaved, pseudo-optimal and
locally risk-minimizing strategies are the same.} through the following well-known result, see~\cite{s01}.  
\begin{proposition}\label{pr:mainres}
A benchmarked contingent claim $\hat H \in L^2(\F_T,\P)$ admits a benchmarked pseudo-locally risk-minimizing strategy $\delta$ with $\hat S_T^\delta=\hat H$ $\P$-a.s. if and only if $\hat H$ can be written as
\begin{equation}\label{fsdecomp1}
\hat H=\hat H_0+\int_0^T\xi_u^{\hat H}\cdot \ud \hat S_u+L_T^{\hat H}, \quad \P-{\rm a.s.}
\end{equation}
with $\hat H_0 \in L^2(\F_0,\P)$, $\xi^{\hat H}$ is an $\bF$-predictable vector process satisfying the following integrability condition
$$
\esp{\int_0^T(\xi_s^{\hat H})^\top \ud [M]_s \xi_s^{\hat H}+\left(\int_0^T|(\xi_s^{\hat H})^\top||\ud V_s|\right)^2}<\infty,
$$
where for $\omega \in \Omega$, $\ud V_s(\omega)$ denotes the (signed) Lebesgue-Stieltjes measure corresponding to the finite variation function $s \mapsto V_s(\omega)$ and $|\ud V_s|(\omega)$ the associated total variation measure, and $L^{\hat H} \in \mathcal M_0^2(\P)$ is strongly orthogonal to $M$. The strategy $\delta$ is then given by
$$
\delta_t=\xi_t^{\hat H}, \quad t \in [0,T],
$$
its benchmarked value process is 
$$
\hat S_t^\delta=\hat S_0^\delta+\int_0^t \delta_s\cdot\ud \hat S_s+\hat C_t^\delta, \quad t \in [0,T],
$$
and the benchmarked cost process equals
$$
\hat C_t^\delta=\hat H_0
+ \hat L_t^{\hat H}, \quad t \in [0,T].
$$
\end{proposition}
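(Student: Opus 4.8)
The plan is to obtain this as a transcription of Schweizer's characterization of pseudo-locally risk-minimizing (pseudo-optimal) strategies~\cite{s01} into the present benchmarked setting, where the reference unit is the num\'eraire portfolio rather than the savings account. The decisive observation, already used in~\cite{bp}, is that Schweizer's theory rests solely on the structural fact that the benchmarked price vector $\hat S$ is a special $\P$-semimartingale belonging to $\mathcal S^2(\P)$, with canonical decomposition $\hat S = \hat S_0 + M + V$ as in~\eqref{supermartdeco}; the financial meaning of the discounting factor never enters the arguments. Under the standing hypotheses each $\hat S^i$ is a locally square-integrable nonnegative $\P$-supermartingale, so this requirement holds, and the notions of $L^2$-admissible strategy, benchmarked cost~\eqref{profloss} and benchmarked risk (Definition~\ref{def:risk}) are, mutatis mutandis, those of~\cite{s01}. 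The heart of the matter is the following characterization, which I would import from~\cite{s01}: an $L^2$-admissible strategy $\delta$ with $\hat S_T^\delta = \hat H$ is benchmarked pseudo-locally risk-minimizing if and only if its benchmarked cost process $\hat C^\delta$ is a square-integrable $\P$-martingale strongly orthogonal to the local martingale part $M$ of $\hat S$.

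Granting this, the two implications reduce to direct computations. For the ``if'' part, suppose $\hat H$ admits the decomposition~\eqref{fsdecomp1}. Put $\delta := \xi^{\hat H}$ and let the benchmarked value process be
\begin{equation*}
\hat S_t^\delta := \hat H_0 + \int_0^t \xi_u^{\hat H}\cdot\ud\hat S_u + L_t^{\hat H}, \qquad t \in [0,T].
\end{equation*}
Then $\hat S_T^\delta = \hat H$, and by~\eqref{profloss} the associated cost is $\hat C_t^\delta = \hat S_t^\delta - \int_0^t \delta_u\cdot\ud\hat S_u = \hat H_0 + L_t^{\hat H}$, a square-integrable $\P$-martingale strongly orthogonal to $M$ because $L^{\hat H} \in \mathcal M_0^2(\P)$ enjoys this orthogonality and $\hat H_0$ is $\F_0$-measurable. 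The stated integrability condition on $\xi^{\hat H}$, together with $\hat S \in \mathcal S^2(\P)$, guarantees that $\delta$ is $L^2$-admissible. By the characterization, $\delta$ is pseudo-optimal, and we recover $\hat C^\delta = \hat H_0 + L^{\hat H}$.

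For the ``only if'' part, let $\delta$ be benchmarked pseudo-locally risk-minimizing with $\hat S_T^\delta = \hat H$. By the characterization, $\hat C^\delta$ is a square-integrable $\P$-martingale strongly orthogonal to $M$. Setting $\hat H_0 := \hat C_0^\delta = \hat S_0^\delta \in L^2(\F_0,\P)$, $L_t^{\hat H} := \hat C_t^\delta - \hat C_0^\delta \in \mathcal M_0^2(\P)$ and $\xi^{\hat H} := \delta$, the process $L^{\hat H}$ remains strongly orthogonal to $M$, and rearranging~\eqref{profloss} at $t = T$ gives
\begin{equation*}
\hat H = \hat S_T^\delta = \hat H_0 + \int_0^T \xi_u^{\hat H}\cdot\ud\hat S_u + L_T^{\hat H},
\end{equation*}
which is exactly~\eqref{fsdecomp1}; the integrability of $\xi^{\hat H}$ follows from $L^2$-admissibility.

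The genuine obstacle is the characterization lemma itself. Its proof is technical: one perturbs $\delta$ along a shrinking sequence of stochastic intervals, expands the increment of the risk functional in Definition~\ref{def:risk}, and passes to a limit, the integrability built into the admissibility class $\mathcal S^2(\P)$ and into the condition on $\xi^{\hat H}$ being precisely what licenses the dominated-convergence step and the identification of the limiting first-order condition with ``martingale cost, strongly orthogonal to $M$''. Rather than reproduce this delicate analysis, I would cite it from~\cite{s01}, having first verified---as in~\cite{bp}---that its sole structural input is that $\hat S$ be a special semimartingale of class $\mathcal S^2(\P)$ with decomposition~\eqref{supermartdeco}, which our benchmarked framework provides.
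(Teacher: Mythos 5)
Your proposal is correct and takes essentially the same route as the paper, which in fact gives no proof at all: it presents the proposition as a ``well-known result, see~\cite{s01}'', and your transcription of Schweizer's characterization into the benchmarked $\mathcal S^2(\P)$ setting (following~\cite{bp}) is precisely the intended justification. One clarification worth noting: in~\cite{s01} pseudo-optimality is \emph{defined} by the property that the cost process is a square-integrable $\P$-martingale strongly orthogonal to $M$, so the ``characterization lemma'' you defer to the literature as the genuine obstacle is actually a definition rather than a deep result --- the delicate perturbation analysis you describe is only needed to relate pseudo-optimality to the original notion of local risk-minimization (a point the paper relegates to a footnote) --- and consequently your two direct computations already constitute essentially the entire proof.
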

\noindent Decompositions \eqref{fsdecomp} and \eqref{fsdecomp1} for $\hat H \in L^2(\F_T,\P)$ are also known in the literature as the F\"ollmer-Schweizer decompositions for $\hat H$.

\subsubsection{Invariance under a Change of Numéraire} \label{sec:inv}

A natural question to clarify is when risk-minimizing strategies are invariant under a change of numéraire.
Indeed, if the primary security accounts $S^j$, $j \in \{0,1,\ldots,d\}$, are continuous, then Theorem
3.1 of~\cite{bp} ensures that the strategy is invariant under a
change of numéraire. In this case the process $\xi^{\hat H}$ appearing in decomposition \eqref{fsdecomp1}
also provides the classical locally risk-minimizing strategy (if it exists) for the discounted contingent claim $\bar H:=\ds \frac{H}{S_T^0}$.
However, if the primary security accounts $S^j$ are only  right-continuous, it is still possible to extend some results of~\cite{bp} as follows:\\
Consider two discounting factors $S^0$ and $S^{\delta_*}$. 
Given an $L^2$-admissible strategy
$\delta$, we now assume that the two stochastic integrals $\int_0^\cdot \delta_s\cdot \ud \bar S_s:=\int_0^\cdot \delta_s\cdot \ud \left(\frac{S_s}{S_s^0}\right)$ and $\int_0^\cdot \delta_s\cdot \ud \hat S_s$ exist. Denote by $\bar C^\delta$ and $\hat C^\delta$ the cost processes associated to the strategy $\delta$ denominated in units of $S^0$ and $S^{\delta_*}$, respectively. 
\begin{lemma}\label{lem:c}
If $\bar C^\delta$ and $\hat C^\delta$ are the cost processes of the strategy $\delta$, then
\begin{equation} \label{eq:diffc}
\ud \hat C_t^\delta=\hat S_{t-}^0\ud \bar C_t^\delta+\ud [\bar C^\delta,\hat S^0]_t.
\end{equation}
\end{lemma}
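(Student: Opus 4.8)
The plan is to reduce the statement to a single application of the integration-by-parts formula for semimartingales, after expressing every benchmarked quantity in terms of its $S^0$-discounted counterpart. The key observation is that the two discounting factors are linked through the benchmarked savings account $\hat S^0=S^0/S^{\delta_*}$: for any process $Y$ one has $\hat Y=Y/S^{\delta_*}=(Y/S^0)(S^0/S^{\delta_*})=\bar Y\,\hat S^0$. In particular $\hat S^\delta=\bar S^\delta\,\hat S^0$ and, componentwise, $\hat S^j=\bar S^j\,\hat S^0$ for $j=0,1,\ldots,d$. Since both cost processes are by definition the difference between the portfolio value and the gains-from-trade integral, namely $\hat C^\delta=\hat S^\delta-\int_0^\cdot\delta_u\cdot\ud \hat S_u$ and $\bar C^\delta=\bar S^\delta-\int_0^\cdot\delta_u\cdot\ud \bar S_u$, it suffices to track how multiplication by $\hat S^0$ propagates through each term.

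First I would apply It\^o's product rule to $\hat S^\delta=\bar S^\delta\,\hat S^0$ and to each $\hat S^j=\bar S^j\,\hat S^0$, obtaining
$$
\ud \hat S_t^\delta=\bar S_{t-}^\delta\,\ud \hat S_t^0+\hat S_{t-}^0\,\ud \bar S_t^\delta+\ud[\bar S^\delta,\hat S^0]_t
$$
and
$$
\delta_t\cdot\ud \hat S_t=(\delta_t\cdot\bar S_{t-})\,\ud \hat S_t^0+\hat S_{t-}^0\,(\delta_t\cdot\ud \bar S_t)+\delta_t\cdot\ud[\bar S,\hat S^0]_t,
$$
where in the second identity the predictability of $\delta$ lets me move it inside the covariation bracket. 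Using the defining relation $\bar S_{t-}^\delta=\delta_t\cdot\bar S_{t-}$ for the left-limit of the $S^0$-discounted portfolio, the terms carrying $\ud \hat S_t^0$ are identical in the two expressions, so when forming $\ud \hat C_t^\delta=\ud \hat S_t^\delta-\delta_t\cdot\ud \hat S_t$ they cancel, leaving
$$
\ud \hat C_t^\delta=\hat S_{t-}^0\bigl(\ud \bar S_t^\delta-\delta_t\cdot\ud \bar S_t\bigr)+\ud[\bar S^\delta,\hat S^0]_t-\delta_t\cdot\ud[\bar S,\hat S^0]_t.
$$

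It then remains to recognise the two surviving groups of terms. The parenthesis is exactly $\ud \bar C_t^\delta$ by the definition of the $S^0$-discounted cost process, which produces the term $\hat S_{t-}^0\,\ud \bar C_t^\delta$. For the covariation terms I would compute $[\bar C^\delta,\hat S^0]$ directly from $\bar C^\delta=\bar S^\delta-\int_0^\cdot\delta_u\cdot\ud \bar S_u$: by bilinearity of the quadratic covariation together with the identity $[\int_0^\cdot\delta_u\cdot\ud \bar S_u,\hat S^0]=\int_0^\cdot\delta_u\cdot\ud[\bar S,\hat S^0]$ for stochastic integrals, one gets $\ud[\bar C^\delta,\hat S^0]_t=\ud[\bar S^\delta,\hat S^0]_t-\delta_t\cdot\ud[\bar S,\hat S^0]_t$, which matches the remaining group exactly. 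Combining the two pieces yields \eqref{eq:diffc}.

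The main difficulty is bookkeeping rather than a deep obstacle: one must treat the $(d+1)$-dimensional covariation $[\bar S,\hat S^0]$ componentwise and keep careful track of left-limits, so that the predictable integrand $\delta$ may legitimately be moved in and out of both the stochastic integrals and the covariation brackets. These manipulations are justified because $\delta$ is $L^2$-admissible and both integrals $\int_0^\cdot\delta_u\cdot\ud \bar S_u$ and $\int_0^\cdot\delta_u\cdot\ud \hat S_u$ are assumed to exist, which guarantees that every covariation appearing above is well-defined and finite.
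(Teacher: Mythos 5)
Your proposal is correct and takes essentially the same route as the paper's proof: both factor each benchmarked quantity as its $S^0$-discounted counterpart times $\hat S^0$, apply It\^o's product rule, use the pull-out identity $[\int \delta\cdot \ud \bar S,\hat S^0]=\int \delta\cdot \ud[\bar S,\hat S^0]$ for the predictable integrand, and cancel the $\ud \hat S^0$ terms via the left-limit identity for the portfolio value, which is exactly the identity the paper's own computation invokes. The only difference is bookkeeping order --- you expand both $\ud \hat S_t^\delta$ and $\delta_t\cdot\ud \hat S_t$ and subtract, identifying $\ud \bar C_t^\delta$ and $\ud[\bar C^\delta,\hat S^0]_t$ at the end, whereas the paper substitutes $\ud \bar S_t^\delta=\delta_t\cdot\ud \bar S_t+\ud \bar C_t^\delta$ at the outset and then regroups to recognize $\delta_t\cdot\ud \hat S_t$.
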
 
\begin{proof}
This result extends Lemma 3.1 in~\cite{bp}. 
For the reader's convenience we provide here briefly the proof of \eqref{eq:diffc}. It is formally analogous to the one of Lemma 3.1 in~\cite{bp}. By It\^o's formula, we have
\begin{align*}
&\ud \hat S_t^\delta\\
&=\ud \left(\frac{S_t^\delta}{S_t^{\delta_*}}\right) = \ud \left(\frac{S_t^\delta}{S_t^0}\cdot \frac{S_t^0}{S_t^{\delta_*}}\right)=\frac{S_{t-}^\delta}{S_{t-}^0}\ud \left(\frac{S_t^0}{S_t^{\delta_*}}\right) + \frac{S_{t-}^0}{S_{t-}^{\delta_*}} \ud \left(\frac{S_t^\delta}{S_t^0}\right) + \ud \left[\frac{S^\delta}{S^0},\frac{S^0}{S^{\delta_*}}\right]_t\\
& = \delta_{t-}\frac{S_{t-}}{S_{t-}^0}\ud \left(\frac{S_t^0}{S_t^{\delta_*}}\right) + \frac{S_{t-}^0}{S_{t-}^{\delta_*}} \ud \left(\frac{S_t^\delta}{S_t^0}\right) + \ud \left[\frac{S^\delta}{S^0},\frac{S^0}{S^{\delta_*}}\right]_t.
\end{align*}
Since 
$$
\ud \left(\frac{S_t^\delta}{S_t^0}\right) = \delta_{t-} \ud \left(\frac{S_t}{S_t^0}\right) + \ud \bar C_t^\delta,
$$
then
$$
\ud \left[\frac{S^\delta}{S^0},\frac{S^0}{S^{\delta_*}}\right]_t=\delta_{t-}\ud \left[\frac{S}{S^0},\frac{S^0}{S^{\delta_*}}\right]_t + \ud \left[\bar C^\delta,\frac{S^0}{S^{\delta_*}}\right]_t.
$$
Finally
\begin{align*}
\ud \hat S_t^\delta & = \delta_{t-} \left\{\frac{S_{t-}}{S_{t-}^0}\ud \left(\frac{S_t^0}{S_t^{\delta_*}}\right) + \frac{S_{t-}^0}{S_{t-}^{\delta_*}} \ud \left(\frac{S_t}{S_t^0}\right) + \ud \left[\frac{S}{S^0},\frac{S^0}{S^{\delta_*}}\right]_t\right\}\\
& \quad \quad + \frac{S_{t-}^0}{S_{t-}^{\delta_*}}\ud \bar C_t^\delta + \ud \left[\bar C^\delta,\frac{S^0}{S^{\delta_*}}\right]_t\\
&=\delta_{t-}\ud \left(\frac{S_t}{S_t^{\delta_*}}\right)+\frac{S_{t-}^0}{S_{t-}^{\delta_*}}\ud \bar C_t^\delta + \ud \left[\bar C^\delta,\frac{S^0}{S^{\delta_*}}\right]_t\\
&=\delta_{t-}\ud \hat S_t^\delta + \hat S_{t-}^0\ud \bar C_t^\delta + \ud \left[\bar C^\delta,\hat S^0\right]_t.
\end{align*}
\end{proof}
\noindent By Lemma \ref{lem:c}, the cost process of a risk-minimizing strategy (with respect to a given discounting factor) is given by a $\P$-martingale. This property provides a fundamental characterization of (local) risk-minimizing strategies (with respect to a given discounting factor). Here we show that they are invariant under a change of numéraire.
\begin{proposition} \label{prop:icn}
Under the same hypotheses of the previous lemma, if the
process $\bar C^\delta$ is a continuous $\P$-local martingale strongly orthogonal to the martingale part of $\bar S$, then $\hat C^\delta$ is also a (continuous) $\P$-local martingale strongly orthogonal to the martingale part of $\hat S$.
\end{proposition}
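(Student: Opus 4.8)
The plan is to start from the identity \eqref{eq:diffc} of Lemma \ref{lem:c}, namely $\ud \hat C_t^\delta=\hat S_{t-}^0\,\ud \bar C_t^\delta+\ud[\bar C^\delta,\hat S^0]_t$, and to show that the covariation term $[\bar C^\delta,\hat S^0]$ vanishes identically. Once this is established, $\hat C^\delta$ coincides, up to its initial value, with the stochastic integral $\int_0^\cdot \hat S_{u-}^0\,\ud\bar C_u^\delta$, from which both the (continuous) $\P$-local martingale property and the strong orthogonality will follow at once.

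To treat the covariation term, I would first note that, since $\bar C^\delta$ is continuous, $[\bar C^\delta,\hat S^0]$ reduces to the predictable covariation $\langle \bar C^\delta,(\hat S^0)^c\rangle$ of $\bar C^\delta$ with the continuous local martingale part $(\hat S^0)^c$ of $\hat S^0$. The decisive point is that $(\hat S^0)^c$ is a stochastic integral with respect to the continuous local martingale part $(M^{\bar S})^c$ of $\bar S$: indeed $\hat S^0=S^0/S^{\delta_*}=1/\bar S^{\delta_*}$, the discounted num\'eraire portfolio $\bar S^{\delta_*}$ is self-financing in the $S^0$-discounted market so that $(\bar S^{\delta_*})^c=\int_0^\cdot \delta_{*,u}\cdot\ud(M^{\bar S})^c_u$, and It\^o's formula applied to $x\mapsto 1/x$ gives $(\hat S^0)^c=-\int_0^\cdot (\bar S_{u-}^{\delta_*})^{-2}\,\delta_{*,u}\cdot\ud(M^{\bar S})^c_u$. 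By hypothesis $\bar C^\delta$ is strongly orthogonal to the martingale part $M^{\bar S}$ of $\bar S$ and is continuous, hence $\langle \bar C^\delta,(M^{\bar S,k})^c\rangle=0$ for every $k$; by the Kunita-Watanabe bilinearity this forces $\langle \bar C^\delta,(\hat S^0)^c\rangle=0$, i.e. $[\bar C^\delta,\hat S^0]=0$.

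Substituting this into \eqref{eq:diffc} leaves $\hat C_t^\delta=\hat C_0^\delta+\int_0^t \hat S_{u-}^0\,\ud\bar C_u^\delta$. Since $\hat S^0$ is c\`adl\`ag and strictly positive, $\hat S_{-}^0$ is a locally bounded predictable integrand, so this is a continuous $\P$-local martingale, which proves the first claim. For the strong orthogonality, since $\hat C^\delta$ is continuous it suffices to verify $[\hat C^\delta,\hat S^j]=\langle \hat C^\delta,(\hat S^j)^c\rangle=0$ for each $j\in\{0,1,\dots,d\}$, where $(\hat S^j)^c$ equals the continuous martingale part of the local martingale part $M^{\hat S,j}$ because the predictable finite variation part of $\hat S^j$ does not contribute. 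Writing $\hat S^j=\bar S^j\hat S^0$ and applying the product rule gives $(\hat S^j)^c=\int_0^\cdot \bar S_{u-}^j\,\ud(\hat S^0)^c_u+\int_0^\cdot \hat S_{u-}^0\,\ud(\bar S^j)^c_u$, which together with the representation of $(\hat S^0)^c$ and $(\bar S^j)^c=(M^{\bar S,j})^c$ exhibits $(\hat S^j)^c$ again as a stochastic integral with respect to $(M^{\bar S})^c$. Using the representation $\hat C^\delta=\hat C_0^\delta+\int_0^\cdot \hat S_{u-}^0\,\ud\bar C_u^\delta$ together with $\langle \bar C^\delta,(M^{\bar S,k})^c\rangle=0$ for all $k$, every such covariation vanishes, so $\hat C^\delta$ is strongly orthogonal to each component of $M^{\hat S}$.

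I expect the main obstacle to be precisely the decisive point above: proving that the change of num\'eraire keeps the continuous martingale parts of $\hat S^0$ and of all $\hat S^j$ inside the stable subspace of stochastic integrals generated by $(M^{\bar S})^c$. This is the structural reason why strong orthogonality is num\'eraire-invariant, and it rests on the self-financing representation of the discounted num\'eraire portfolio together with the behaviour of the continuous-martingale-part operator under It\^o's formula and the product rule; granting this, the remaining verifications are routine.
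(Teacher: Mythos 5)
Your proof is correct, and it rests on the same underlying mechanism as the paper's own argument: Lemma \ref{lem:c}, It\^o's formula for the inverse $\hat S^0=(\bar S^{\delta_*})^{-1}$, continuity of $\bar C^\delta$ to annihilate jump and finite-variation contributions, and the reduction of everything to the vanishing of the bracket of $\bar C^\delta$ against the martingale part of $\bar S^{\delta_*}$. The organization differs in a useful way, though. The paper computes $\ud\bigl[\hat C^\delta, S/S^{\delta_*}\bigr]_t$ head-on through the two covariation chains \eqref{eq:brcd} and \eqref{eq:brcd1}, and the local martingale property of $\hat C^\delta$ is left implicit in \eqref{eq:diffc} once $[\bar C^\delta,\hat S^0]$ is seen to vanish; you instead establish $[\bar C^\delta,\hat S^0]\equiv 0$ first, obtain the clean representation $\hat C^\delta=\hat C_0^\delta+\int_0^\cdot\hat S_{u-}^0\,\ud\bar C_u^\delta$, and read off both conclusions from it via the calculus of continuous martingale parts. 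Most importantly, you make explicit a step the paper glosses over: its final line deduces $\ud\bigl[\bar C^\delta, S^{\delta_*}/S^0\bigr]_t=0$ from orthogonality ``to $\bar S$'', while the stated hypothesis is only orthogonality to the \emph{martingale part} of $\bar S$. Bridging the two requires exactly your decisive point, namely that by num\'eraire invariance of the self-financing property $(\bar S^{\delta_*})^c=\int_0^\cdot\delta_{*,u}\cdot\ud(M^{\bar S})_u^c$ lies in the stable subspace generated by $(M^{\bar S})^c$, so that Kunita--Watanabe bilinearity together with $\langle\bar C^\delta,(M^{\bar S,k})^c\rangle=0$ forces the bracket to vanish. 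Your write-up therefore not only reproduces the paper's result but fills in a genuine (if minor) omission in its proof.
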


\begin{proof}
This result generalizes Proposition 3.1 of~\cite{bp}. The proof essentially follows from It\^o's formula and Lemma \ref{lem:c}.
From integration by parts formula, we have that
$$
\ud \left(\frac{S_t}{S_t^{\delta_*}}\right)=\ud \left(\frac{S_t}{S_t^0}\cdot \frac{S_t^0}{S_t^{\delta_*}}\right)=\frac{S_{t-}}{S_{t-}^0}\ud \left(\frac{S_t^0}{S_t^{\delta_*}}\right)+\frac{S_{t-}^0}{S_{t-}^{\delta_*}}\ud \left(\frac{S_t}{S_t^0}\right)+\ud \left[\frac{S}{S^0}, \frac{S^0}{S^{\delta_*}}\right]_t
$$
where by It\^o's formula
\begin{equation*}
\begin{split}
\ud \left(\frac{S_t^0}{S_t^{\delta_*}}\right)&=\ud \left[\left(\frac{S_t^{\delta_*}}{S_t^0}\right)^{-1}\right]=-\left(\frac{S_{t-}^0}{S_{t-}^{\delta_*}}\right)^2\ud \left(\frac{S_t^{\delta_*}}{S_t^0}\right)+\left(\frac{S_{t-}^0}{S_{t-}^{\delta_*}}\right)^3\ud \left[\frac{S^{\delta_*}}{S^0},\frac{S^{\delta_*}}{S^0}\right]_t\\
& \quad \quad + \underbrace{
\Delta \left(\frac{S_t^{\delta_*}}{S_t^0}\right)^{-1}+\left(\frac{S_{t-}^0}{S_{t-}^{\delta_*}}\right)^2\Delta \left(\frac{S_{t}^{\delta_*}}{S_{t}^0}\right)-\left(\frac{S_{t-}^0}{S_{t-}^{\delta_*}}\right)^3\left[\Delta \left(\frac{S_{t}^{\delta_*}}{S_{t}^0}\right)\right]^2
}_{:=\ud \Sigma_t}.
\end{split}
\end{equation*}
From Lemma \ref{lem:c}, we have
\begin{align}
&\ud \left[\hat C^\delta, \frac{S}{S^{\delta_*}}\right]_t\nonumber\\
&=\frac{S_{t-}^0}{S_{t-}^{\delta_*}}\ud \left[\bar C^\delta, \frac{S}{S^{\delta_*}}\right]_t+\underbrace{\ud \left[\left[\bar C^\delta,\frac{S^0}{S^{\delta_*}}\right],\frac{S}{S^{\delta_*}}\right]_t}_{0} \nonumber \\
&=\frac{S_{t-}}{S_{t-}^{\delta_*}}\ud \left[\bar C^\delta, \frac{S^0}{S^{\delta_*}}\right]_t+\underbrace{\left(\frac{S_{t-}^0}{S_{t-}^{\delta_*}}\right)^2\ud \left[\bar C^\delta, \frac{S}{S^0}\right]_t}_{0}+\underbrace{\frac{S_{t-}^0}{S_{t-}^{\delta_*}}\ud \left[\bar C^\delta, \left[\frac{S}{S^0},\frac{S^0}{S^{\delta_*}}\right]\right]_t}_{0} \nonumber\\
& \quad \quad + \underbrace{\ud \left[\left[\bar C^\delta,\frac{S^0}{S^{\delta_*}}\right],\frac{S}{S^{\delta_*}}\right]_t}_{0} \nonumber\\
&=\frac{S_{t-}}{S_{t-}^{\delta_*}}\ud \left[\bar C^\delta, \frac{S^0}{S^{\delta_*}}\right]_t,  \label{eq:brcd}
\end{align}
where we have used the fact that $\bar C^\delta$ is a continuous $\P$-local martingale strongly orthogonal to the martingale part of $\bar S$.
Furthermore,
\begin{equation}\label{eq:brcd1}
\begin{split}
&\ud \left[\bar C^\delta, \frac{S^0}{S^{\delta_*}}\right]_t\\
&=-\left(\frac{S_{t-}^0}{S_{t-}^{\delta_*}}\right)^2\ud \left[\bar C^\delta, \frac{S^{\delta_*}}{S^0}\right]_t+\left(\frac{S_{t-}^0}{S_{t-}^{\delta_*}}\right)^3 \ud\left[\bar C^\delta, \left[\frac{S^{\delta_*}}{S^0},\frac{S^{\delta_*}}{S^0}\right]\right]_t+ \ud \left[\bar C^\delta, \Sigma \right]_t \\
&=-\left(\frac{S_{t-}^0}{S_{t-}^{\delta_*}}\right)^2\ud \left[\bar C^\delta, \frac{S^{\delta_*}}{S^0}\right]_t.
\end{split}
\end{equation}
If now $\bar C^\delta$ is a $\P$-local martingale strongly orthogonal to $\bar S$, then 
$$
\ud \left [ \bar C^\delta, \frac{S^{\delta_*}}{S^0}\right]_t=0, \quad t \in [0,T].
$$
By \eqref{eq:brcd} and \eqref{eq:brcd1}, we have that also $\ud \left[\hat C^\delta, \frac{S}{S^{\delta_*}}\right]_t
=0$, hence $\hat C^\delta$ is strongly orthogonal to the martingale part of $S$. This concludes the proof.
\end{proof}

\noindent Now it is possible to state the main result that guarantees that invariance under change of numéraire is kept in the case of right-continuous asset price processes if we assume that the cost process $\bar C^\delta$ is continuous.

\begin{theorem}
Let $\delta$ be an $L^2$-admissible strategy with respect to the numéraires
$S^0$ and $S^{\delta_*}$ and assume that $\bar C^\delta$ is continuous. If $\delta$ is locally risk-minimizing under the numéraire $S^0$, then
$\delta$ is locally risk-minimizing also with respect to the numéraire  $S^{\delta_*}$, i.e. it is benchmarked locally risk-minimizing.
\end{theorem}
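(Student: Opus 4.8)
The plan is to reduce the statement to the martingale characterization of (pseudo-)locally risk-minimizing strategies recorded in Proposition \ref{pr:mainres} and then to transport the decisive structural property from one num\'eraire to the other by means of Proposition \ref{prop:icn}. The guiding principle is that, for either discounting factor, a strategy replicating $\hat H$ is (pseudo-)locally risk-minimizing exactly when its associated cost process is a square-integrable $\P$-martingale whose martingale part lies in $\mathcal M_0^2(\P)$ and is strongly orthogonal to the martingale part of the correspondingly discounted price process. Thus the whole argument amounts to verifying that this property is preserved when one passes from the num\'eraire $S^0$ to the num\'eraire $S^{\delta_*}$, which is precisely what Lemma \ref{lem:c} and Proposition \ref{prop:icn} were designed to achieve.

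First I would unpack the hypothesis. Reading the characterization of Proposition \ref{pr:mainres} with $S^0$ in place of $S^{\delta_*}$, the assumption that $\delta$ is locally risk-minimizing under $S^0$ is equivalent to saying that the cost process $\bar C^\delta$ is a square-integrable $\P$-martingale whose martingale part is strongly orthogonal to the martingale part of $\bar S$. In particular $\bar C^\delta$ is a $\P$-local martingale strongly orthogonal to the martingale part of $\bar S$, and by the standing continuity hypothesis of the theorem it is moreover continuous. Hence the hypotheses of Proposition \ref{prop:icn} are met verbatim.

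Next I would invoke Proposition \ref{prop:icn} to transfer this structure to the benchmarked side: it yields at once that $\hat C^\delta$ is a (continuous) $\P$-local martingale strongly orthogonal to the martingale part of $\hat S$. The mechanism behind this transfer is Lemma \ref{lem:c}, which expresses $\ud \hat C^\delta$ through $\ud \bar C^\delta$ and the bracket $\ud[\bar C^\delta,\hat S^0]$, and it is precisely the continuity of $\bar C^\delta$ that annihilates the jump contributions and keeps the orthogonality intact. The one remaining gap is the passage from the local-martingale conclusion of Proposition \ref{prop:icn} to the genuine square-integrable martingale property demanded by Proposition \ref{pr:mainres}. I would close it using the $L^2$-admissibility of $\delta$: by the definition of an $L^2$-admissible strategy the portfolio $\hat S^\delta$ is square-integrable and $\int \delta \cdot \ud \hat S$ satisfies \eqref{admissible}, so $\hat C^\delta = \hat S^\delta - \int \delta \cdot \ud \hat S$ is square-integrable, and a square-integrable continuous local martingale is a true martingale with $\hat C^\delta - \hat C_0^\delta \in \mathcal M_0^2(\P)$.

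Finally I would read Proposition \ref{pr:mainres} in the reverse direction, now with the num\'eraire $S^{\delta_*}$: since $\hat C^\delta$ is a square-integrable $\P$-martingale strongly orthogonal to the martingale part of $\hat S$ and $\hat S_T^\delta = \hat H$ by $L^2$-admissibility, the strategy $\delta$ is benchmarked (pseudo-)locally risk-minimizing, which is the assertion. The step I expect to be the main obstacle is not any isolated deep fact but the careful bookkeeping required to confirm that the integrability conditions of Proposition \ref{pr:mainres} are genuinely inherited across the change of num\'eraire — concretely, that the square-integrability of $\bar C^\delta$ together with $L^2$-admissibility really upgrades the local martingale $\hat C^\delta$ delivered by Proposition \ref{prop:icn} to an element of $\mathcal M_0^2(\P)$, and that the continuity hypothesis on $\bar C^\delta$ is exactly the ingredient permitting Proposition \ref{prop:icn} to apply without any additional control on the jumps of the price processes.
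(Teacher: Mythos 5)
Your proof follows essentially the same route as the paper's own (very brief) proof: Proposition \ref{prop:icn} (with Lemma \ref{lem:c} as the mechanism) transfers the continuous local-martingale and strong-orthogonality structure of the cost process from the num\'eraire $S^0$ to $S^{\delta_*}$, and $L^2$-admissibility with respect to $S^{\delta_*}$ upgrades $\hat C^\delta$ to a square-integrable $\P$-martingale, whence the martingale characterization of (pseudo-)locally risk-minimizing strategies in Proposition \ref{pr:mainres} gives the conclusion. Your one loose phrase --- ``a square-integrable continuous local martingale is a true martingale'', which is false as a general statement (the inverse three-dimensional Bessel process is a continuous strict local martingale with finite second moments) and should instead rest on the integrability of the quadratic variation supplied by condition \eqref{admissible} --- is glossed at exactly the same level of detail in the paper's proof, so this is a shared imprecision rather than a gap specific to your argument.
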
 

\begin{proof}
The proof is an immediate consequence of Proposition \ref{prop:icn}: if $\delta$ is a locally risk-minimizing strategy under $S^0$, the cost $\hat C^\delta$ is a $\P$-local martingale strongly orthogonal to the martingale part of
$\hat S$. But since the
strategy is $L^2$-admissible with respect to $S^{\delta_*}$, the cost process $\hat C^\delta$ is actually a square integrable
$\P$-martingale.
\end{proof}


\subsection{Benchmarked local risk-minimization under incomplete information}

Here we show an example of benchmarked local risk-minimization that works under general assumptions on $\hat S$.
Similarly to~\cite{fs}, we consider a situation where the financial market would be complete if we had more information. The available information is described by the filtration $\mathbb F$. We suppose that the benchmarked claim $\hat H$ is attainable with respect to some larger filtration. Only at the terminal time $T$, but not at times $t < T$, all the information relevant for a perfect hedging of a claim will be available to us. So let $\tilde{\mathbb F}:=(\tilde \F_t)_{0 \leq t \leq T}$ be a right-continuous filtration such that 
$$
\F_t \subseteq \tilde \F_t \subseteq \F, \quad t \in [0,T].
$$
We now show how the results of~\cite{fs} hold without assuming that the underlying asset price processes are continuous, if we consider a general benchmarked market. Note furthermore that we are not going to assume that the benchmarked assets are $\P$-local martingales.\\
Consider now the benchmarked asset price process $\hat S$. Then $\hat S$ is a $\P$-supermartinga-\linebreak le and admits the Doob-Meyer's decomposition
$$
\hat S_t=M_t-A_t, \quad t \in [0,T],
$$
where $A$ is an $\bF$-predictable increasing finite variation process and $M$ is a $\P$-local martingale.
\begin{ass}
\noindent Suppose now that the vector process $\hat S$ belongs to $\mathcal S^2(\P)$, that is, the space of $\P$-semimartingales satisfying the integrability condition
\begin{equation*}\label{eq:integrcond}
\esp{\hat S_0^2+[M]_T+|A|_T^2}< \infty,
\end{equation*}
where $|A|=\{|A|_t:\ t \in [0,T]\}$ is the total variation of $A$.
In addition, the decomposition \eqref{supermartdeco} of $\hat S$ with respect to $\mathbb F$ is still valid with respect to $\tilde{\mathbb F}$. In other words we assume that $M$ is a $\P$-martingale with respect to $\tilde{\mathbb F}$, although it is adapted to the smaller filtration $\mathbb F$.
\end{ass}
\noindent Suppose now that $\hat H \in L^2(\F_T,\P)$ is attainable with respect to the larger filtration $\tilde{\mathbb F}$, i.e. 
\begin{equation} \label{eq:attainable}
\hat H=\tilde H_0+\int_0^T\tilde \xi_s^{\hat H}\ud \hat S_s,
\end{equation}
where $\tilde H_0$ is $\tilde \F_0$-measurable and the process $\tilde \xi^{\hat H}=\{\tilde \xi_t^{\hat H}=(\tilde \xi_t^{\hat H,0},\tilde \xi_t^{\hat H,1},\ldots,\tilde \xi_t^{\hat H,d})^\top,$ $t$ $\in [0,T]\}$ is predictable with respect to $\tilde{\mathbb F}$. We now need to specify suitable integrability conditions.
\begin{ass} \label{ass:intcond}
We suppose that the $(\tilde \bF,\P)$-semimartingale
$$
\tilde H_0+\int_0^t\tilde \xi_s^{\hat H}\ud \hat S_s, \quad t \in [0,T],
$$
associated to $\hat H$ belongs to the space $\mathcal S^2(\P)$, i.e.
\begin{equation}\label{eq:integrcond1}
\esp{\tilde H_0^2+\int_0^T(\tilde \xi_s^{\hat H})^\top\ud [M]_s\tilde \xi_s^{\hat H}+\left(\int_0^T|(\tilde \xi_s^{\hat H})^\top||\ud A_s|\right)^2}< \infty.
\end{equation}
\end{ass}
\begin{theorem} \label{th:incomplete}
Suppose that $\hat H$ satisfies \eqref{eq:attainable} and \eqref{eq:integrcond1}. Then $\hat H$ admits the representation 
\begin{equation} \label{eq:hatHdecomp}
\hat H=\tilde H_0+\int_0^T\xi_s^{\hat H}\ud \hat S_s+L_T^{\hat H},
\end{equation}
with $\hat H_0=E[\tilde H_0|\F_0]$, where
$$
\xi^{\hat H}:={}^p(\tilde \xi^{\hat H})
$$
is the $\mathbb F$-predictable projection of the $\tilde{\mathbb F}$-predictable vector process $\tilde \xi^{\hat H}$, and where $L^{\hat H}=\{L_t^{\hat H}=(L_t^{\hat H,0},L_t^{\hat H,1},\ldots,L_t^{\hat H,d})^\top,\ t \in [0,T]\}$ is the square-integrable $(\bF,\P)$-martingale, orthogonal to $M$, associated to
$$
L_T^{\hat H}:=\tilde H_0-\hat H_0+\int_0^T(\tilde \xi_s^{\hat H}-\xi_s^{\hat H})\ud \hat S_s \in L^2(\F_T,\P).
$$
\end{theorem}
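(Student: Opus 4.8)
My plan is to read \eqref{eq:hatHdecomp} as a genuine F\"ollmer--Schweizer decomposition of $\hat H$ in the small filtration $\mathbb F$ and then to quote Proposition \ref{pr:mainres}. The identity \eqref{eq:hatHdecomp} is tautological: inserting the definition of $L_T^{\hat H}$ into its right-hand side and using the attainability \eqref{eq:attainable} returns $\hat H$, so no algebraic relation has to be verified. The content therefore reduces to three points: (a) the $\mathbb F$-predictable projection $\xi^{\hat H}={}^p(\tilde\xi^{\hat H})$ is $\hat S$-integrable and satisfies the analogue of \eqref{eq:integrcond1}, so that $\int_0^{\,\cdot}\xi^{\hat H}\,\ud\hat S\in\mathcal S^2(\P)$; (b) $L_T^{\hat H}\in L^2(\F_T,\P)$, whence $L_t^{\hat H}:=\condespf{L_T^{\hat H}}$ is a square-integrable $\mathbb F$-martingale with $L_0^{\hat H}=0$ and $\hat H_0=\condespfo{\tilde H_0}$; and (c) $L^{\hat H}$ is strongly orthogonal to $M$. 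Granting (a)--(c), \eqref{eq:hatHdecomp} is exactly the representation required by Proposition \ref{pr:mainres} and the theorem follows.

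For the bracket part of (a) there is a clean contraction. Writing the $\mathbb F$-compensator $\langle M\rangle^{\mathbb F}=\int c_s\,\ud B_s$ of $[M]$, with $B$ an $\mathbb F$-predictable increasing scalar process and $c_s$ an $\mathbb F$-predictable positive semidefinite density, the dual predictable projection formula lets me replace $[M]$ by $\langle M\rangle^{\mathbb F}$ on both sides, after which the integrand attached to $\tilde\xi^{\hat H}$ exceeds the one attached to $\xi^{\hat H}$ by $\mathrm{tr}\!\left(c_s\,\Gamma_s\right)$, where $\Gamma_s:=\mathrm{Cov}(\tilde\xi_s^{\hat H}\mid\F_{s-})$; as the trace of a product of two positive semidefinite matrices is nonnegative, this yields $\esp{\int_0^T(\xi_s^{\hat H})^\top\ud[M]_s\,\xi_s^{\hat H}}\le\esp{\int_0^T(\tilde\xi_s^{\hat H})^\top\ud[M]_s\,\tilde\xi_s^{\hat H}}<\infty$ by Assumption \ref{ass:intcond}. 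The finite-variation term, and with it the membership $L_T^{\hat H}\in L^2(\F_T,\P)$ of (b), is more delicate, since a na\"ive integrand-by-integrand contraction can fail; here I would use that $\hat H$ is $\F_T$-measurable and that $M$ stays a $\tilde{\mathbb F}$-martingale, so that the $\mathbb F$-optional projection of the $\tilde{\mathbb F}$-semimartingale $\tilde H_0+\int_0^{\,\cdot}\tilde\xi^{\hat H}\,\ud\hat S$ remains in $\mathcal S^2(\P)$ and carries $-\int_0^{\,\cdot}\xi^{\hat H}\,\ud A$ as its predictable finite-variation part.

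The heart of the argument is (c), and this is where I expect the main difficulty. I would reduce strong orthogonality to
\[
\esp{L_T^{\hat H}\int_0^T g_s\,\ud M_s^i}=0
\]
for every bounded $\mathbb F$-predictable scalar $g$ and every index $i$, since the left-hand side equals $\esp{\int_0^T g_s\,\ud\langle L^{\hat H},M^i\rangle_s}$ and this characterizes $\langle L^{\hat H},M^i\rangle=0$. Set $W:=\int_0^{\,\cdot}g_s\,\ud M_s^i$. Because $M^i$ is a square-integrable martingale for both $\mathbb F$ and $\tilde{\mathbb F}$ and $g$ is $\mathbb F$-predictable, $W$ is $\mathbb F$-adapted and at the same time a $\tilde{\mathbb F}$-martingale null at the origin --- this is precisely where the Assumption that $\hat S=M-A$ is preserved under $\tilde{\mathbb F}$ enters. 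Expanding $L_T^{\hat H}=(\tilde H_0-\hat H_0)+\int_0^T(\tilde\xi^{\hat H}-\xi^{\hat H})\,\ud\hat S$, the constant term pairs to zero because $W$ is a centred $\tilde{\mathbb F}$-martingale and $\tilde H_0-\hat H_0$ is $\tilde\F_0$-measurable. Integrating the remaining product by parts in $\tilde{\mathbb F}$, the integrability from Assumption \ref{ass:intcond} turns every stochastic integral into a true martingale, so the two It\^o terms have zero expectation and only $\esp{\int_0^T W_{s-}(\tilde\xi_s^{\hat H}-\xi_s^{\hat H})\,\ud A_s}$ and $\esp{\int_0^T(\tilde\xi_s^{\hat H}-\xi_s^{\hat H})\,\ud[\hat S,W]_s}$ survive.

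Both surviving terms vanish by the dual predictable projection formula: each integrand carries the factor $\tilde\xi^{\hat H}-\xi^{\hat H}$, whose $\mathbb F$-predictable projection is $\xi^{\hat H}-\xi^{\hat H}=0$, while the companion factors $W_{s-}$ and $g_s$ are $\mathbb F$-predictable and the integrators $A$ and the $\mathbb F$-compensators of the optional brackets inside $[\hat S,W]$ are $\mathbb F$-predictable; hence each expectation reduces to the integral of ${}^p(\tilde\xi^{\hat H}-\xi^{\hat H})=0$ against an $\mathbb F$-predictable measure. The decisive advantage of this route, and the reason it works in the presence of jumps --- the novelty over the continuous setting of~\cite{fs} --- is that one never has to compare the predictable brackets $\langle M\rangle^{\mathbb F}$ and $\langle M\rangle^{\tilde{\mathbb F}}$, which differ in general: it suffices that the raw optional bracket $[\hat S,W]$ is filtration independent and that its $\mathbb F$-compensator is $\mathbb F$-predictable. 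Finally, conditioning $L_T^{\hat H}$ on $\F_0$ and invoking once more the $\tilde{\mathbb F}$-centredness of the $M$-integral and the dual predictable projection for the $A$-integral gives $L_0^{\hat H}=0$ together with the identification $\hat H_0=\condespfo{\tilde H_0}$, which closes the proof.
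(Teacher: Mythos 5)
Your proposal is correct in outline and follows essentially the same route as the paper's proof: the algebraic identity is indeed tautological, and the real content is exactly your points (a)--(c), which the paper handles with the same toolkit --- Jensen's inequality at predictable stopping times for the $M$-part (your trace bound $\mathrm{tr}(c_s\Gamma_s)\ge 0$ is just the clean multidimensional form of the paper's scalar Jensen step), the fact that $\langle M\rangle$ is the $\bF$-dual predictable projection of $[M]$, the product rule ${}^p(\tilde\xi^{\hat H}\mu)={}^p(\tilde\xi^{\hat H})\mu$ for $\bF$-predictable $\mu$, and the insertion of predictable projections under the $\bF$-predictable measure $\ud A$. Your orthogonality argument via integration by parts of $\bigl(\int(\tilde\xi^{\hat H}-\xi^{\hat H})\ud\hat S\bigr)W$ in $\tilde{\mathbb F}$ is a repackaging of the paper's Step~2, which instead splits $\int\tilde\xi^{\hat H}\ud\hat S$ into its $M$- and $A$-parts and treats the two pairings separately (their \eqref{eq:add1} and \eqref{eq:add2}); the surviving terms and the projections that kill them are identical, and your observation that one never needs to compare $\langle M\rangle^{\mathbb F}$ with $\langle M\rangle^{\tilde{\mathbb F}}$ is precisely why the paper's argument, unlike the continuous setting of~\cite{fs}, survives jumps.

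The one place where you are thinner than the paper is the finite-variation part of (a) and the membership $L_T^{\hat H}\in L^2(\F_T,\P)$ in (b). You correctly flag that a na\"ive contraction fails, but your substitute --- that the $\bF$-optional projection of $\tilde H_0+\int_0^{\cdot}\tilde\xi^{\hat H}\ud\hat S$ ``remains in $\mathcal S^2(\P)$ and carries $-\int_0^{\cdot}\xi^{\hat H}\ud A$ as its predictable finite-variation part'' --- is asserted, not proven, and the $\mathcal S^2$-membership of that projection is exactly the estimate in question, so as written this step begs the question. The paper closes it by duality: for bounded $\F_T$-measurable $Z_T$ one writes
$$
\esp{Z_T\int_0^T\xi_s^{\hat H}\ud A_s}=\esp{\int_0^T Z_{s-}\,\xi_s^{\hat H}\ud A_s}=\esp{\int_0^T Z_{s-}\,\tilde\xi_s^{\hat H}\ud A_s},
$$
using that $A$ is $\bF$-predictable and $Z_{s-}$ is $\bF$-predictable so that $\tilde\xi^{\hat H}$ may be replaced by its projection, and then bounds the right-hand side by $\|Z^*\|_2\,\bigl\|\int_0^T|\tilde\xi_s^{\hat H}|\,|\ud A_s|\bigr\|_2$ via Cauchy--Schwarz, which is finite by Assumption \ref{ass:intcond} and controlled by $\|Z_T\|_2$ through Doob's $L^2$-inequality. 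You should either import this duality argument or supply an actual proof of your filtration-shrinkage claim (identifying $\int\xi^{\hat H}\ud A$ as the $\bF$-compensator of the raw process $\int\tilde\xi^{\hat H}\ud A$ and then still estimating its total variation in $L^2$, which again reduces to the same duality estimate); with that repaired, your proof coincides with the paper's.
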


\begin{proof}
\noindent {\bf Step 1.} First we need to check that all components in \eqref{eq:hatHdecomp} are square-integrable.  Denote by ${}^p X$ the (dual) $\bF$-predictable projection of a process $X$. By \eqref{eq:integrcond1} and by the properties of the $\mathbb F$-predictable projection (see~\cite{dm2}, VI.57),
\begin{equation*}
\infty > \esp{\left(\int_0^T\tilde \xi_s^{\hat H}\ud M_s\right)^2}=\esp{\int_0^T(\tilde \xi_s^{\hat H})^2\ud [M]_s}=\esp{\int_0^T{}^p\left((\tilde \xi_s^{\hat H})^2\right)\ud \langle M\rangle_s},
\end{equation*}
where the last equality holds since $\langle M\rangle$ is the $\bF$-predictable dual projection (see e.g.~\cite{dm2}, VI.73) of $[M]$. By Jensen's inequality we have
$$
\left({}^p(\tilde \xi_\tau^{\hat H})\right)^2=(\xi_\tau^{\hat H})^2=\left(E[\tilde \xi_\tau^{\hat H}|\F_{\tau-}]\right)^2 \leq E[(\tilde \xi_\tau^{\hat H})^2|\F_{\tau-}]={}^p\left((\tilde \xi_\tau^{\hat H})^2\right),
$$
for every predictable $\bF$-stopping time $\tau$, on the set $\{\tau < \infty\}$. Hence if we consider $\tau=s$, again by the properties of the $\mathbb F$-predictable dual projection, it follows that
$$
\esp{\int_0^T{}^p\left((\tilde \xi_s^{\hat H})^2\right)\ud \langle M\rangle_s} \ge \esp{\int_0^T\left({}^p(\tilde \xi_s^{\hat H})\right)^2\ud \langle M\rangle_s} =  \esp{\int_0^T(\xi_s^{\hat H})^2\ud \langle M\rangle_s}
$$
and finally
$$
\int_0^T \xi_s^{\hat H}\ud M_s \in L^2(\F_T,\P).
$$
In order to show that
$$
\int_0^T \xi_s^{\hat H}\ud A_s \in L^2(\F_T,\P),
$$
we prove that
$$
\left|\esp{Z_T\int_0^T\xi_s^{\hat H} \ud A_s}\right| \leq c \cdot \|Z_T\|_2,\quad c \in \R
$$
for any bounded $\F_T$-measurable random variable $Z_T$ with $L^2$-norm $\|Z_T\|_2$. Let $Z=\{Z_t,\ t \in [0,T]\}$ denote a right-continuous version with left-limits of the $(\bF,\P)$-martingale $E[Z_T|\F_t]$, $t\in [0,T]$, and put $Z^*=\sup_{0 \leq t \leq T}|Z_t|$. Since $A$ is $\bF$-predictable, we can use some properties of the $\mathbb F$-predictable projection (see VI.45 and VI.57 in~\cite{dm2}), and obtain
\begin{align*}
\left|\esp{Z_T \int_0^T \xi_s^{\hat H} \ud A_s}\right| & = \left|\esp{\int_0^T Z_{s-}\xi_s^{\hat H} \ud A_s}\right|\\
& = \left|\esp{\int_0^T Z_{s-}\tilde \xi_s^{\hat H} \ud A_s}\right|\\
& \leq \left|Z^*\esp{\int_0^T\tilde \xi_s^{\hat H} \ud |A|_s}\right|\\
& \leq \|Z^*\|_2\left\|\int_0^T\tilde \xi_s^{\hat H} \ud |A|_s\right\|_2\\
& \leq c\cdot \|Z^*\|_2, \quad c \in \R, 
\end{align*}
where in the last inequality we have used \eqref{eq:integrcond1} and Doob's inequality for the supremum of a square-integrable $\P$-martingale.\\
\noindent {\bf Step 2.} Clearly, $\tilde H_0-\hat H_0 \in L^2(\Omega,\tilde \F_0,\P)$ is orthogonal to all square-integrable stochastic integrals of $M$ with respect to the filtration $\tilde{\mathbb F}$, hence in particular with respect to the filtration $\mathbb F$. Then, it only remains to show that
\begin{equation}\label{eq:orthog}
\esp{\left(\int_0^T(\tilde \xi_s^{\hat H}-\xi_s^{\hat H})\ud \hat S_s\right)\left(\int_0^T\mu_s\ud M_s\right)}=0
\end{equation}
for all bounded $\mathbb F$-predictable processes $\mu=\{\mu_t,\ t \in [0,T]\}$. This will imply that the $(\bF,\P)$-martingale $L^{\hat H}$ is orthogonal to $M$. First we note that \eqref{eq:orthog} is equivalent to the following
\begin{equation}\label{eq:orthog1}
\esp{\left(\int_0^T\tilde \xi_s^{\hat H}\ud \hat S_s\right)\left(\int_0^T\mu_s\ud M_s\right)}=\esp{\left(\int_0^T\xi_s^{\hat H}\ud \hat S_s\right)\left(\int_0^T\mu_s\ud M_s\right)}.
\end{equation} 
Then we decompose the left-side of \eqref{eq:orthog1} into
$$
\esp{\left(\int_0^T\tilde \xi_s^{\hat H}\ud M_s\right)\left(\int_0^T\mu_s\ud M_s\right)}+\esp{\left(\int_0^T\tilde \xi_s^{\hat H}\ud A_s\right)\left(\int_0^T\mu_s\ud M_s\right)}.
$$
We have
\begin{equation} \label{eq:add1}
\esp{\left(\int_0^T\tilde \xi_s^{\hat H}\ud M_s\right)\left(\int_0^T\mu_s\ud M_s\right)}=\esp{\int_0^T\tilde \xi_s^{\hat H}\cdot \mu_s \ud [M]_s}
\end{equation}
and
\begin{equation} \label{eq:add2}
\esp{\left(\int_0^T\tilde \xi_s^{\hat H}\ud A_s\right)\left(\int_0^T\mu_s\ud M_s\right)}=\esp{\int_0^T\tilde \xi_s^{\hat H}\left(\int_{[0,s)}\mu_u\ud M_u\right)\ud A_s},
\end{equation}
by the property of the $\bF$-predictable projection (see VI.45 in~\cite{dm2}).
Since $\langle M\rangle$ is the $\bF$-predictable dual projection of $[M]$, we can rewrite \eqref{eq:add1} as follows
\begin{align} \label{eq:prpr}
\esp{\int_0^T\tilde \xi_s^{\hat H}\cdot \mu_s \ud [M]_s}&=\esp{\int_0^T{}^p(\tilde \xi_s^{\hat H}\cdot \mu_s) \ud \langle M\rangle_s} \nonumber \\
&=\esp{\int_0^T{}^p(\tilde \xi_s^{\hat H})\cdot \mu_s \ud \langle M\rangle_s}\\
&=\esp{\int_0^T\xi_s^{\hat H}\cdot \mu_s \ud \langle M\rangle_s},\nonumber
\end{align}
where the second equality \eqref{eq:prpr} follows from Remark 44(e) in~\cite{dm2}. Now, from the properties of the $\bF$-predictable projection it is clear that $\tilde \xi^{\hat H}$ can be replaced by $\xi^{\hat H}$ in \eqref{eq:add2}, and this yields \eqref{eq:orthog}.
\end{proof}

\section{Applications} \label{continuousmarket}

In the remaining part of the paper we discuss some examples that illustrate how
classical (local) risk-minimization is generalized to benchmarked (local) risk-minimization in a market when there is no equivalent risk-neutral probability measure. Finally, we will demonstrate that the presence of jumps
does not create a major problem, which is not easily resolved under classical (local) risk-minimization.

\subsection{Benchmarked Risk-Minimization in the Minimal Market Model with Random Scaling} \label{defaultablebond}

\noindent The notion of a minimal market model (in short MMM) is due to E. Platen and has been introduced in a series of papers with various co-authors; see Chapter 13 of~\cite{ph} for a recent textbook account. 
The version of the MMM described here, which generalizes the stylized version
derived in Section 13.2 of~\cite{ph}, is governed by a particular choice of the discounted numéraire portfolio drift.
The MMM generates stochastic volatilities that involve transformations of squared Bessel processes. \\
We begin by describing a continuous financial market model almost similarly as in Chapter 10 of~\cite{ph}. More precisely, in this framework uncertainty is modeled by $d$ independent standard Wiener processes $W^k=\{W_t^k,\ t \in[0,T]\}$, $k \in \{1,2,\ldots,d\}$ on $(\Omega,\F,\P,\bF)$, where
$\bF:=(\F_t)_{0\leq t \leq T}$, with $\F_t=\F_t^{W^1}\vee\F_t^{W^2}\vee \ldots \vee \F_t^{W^d}$ for each $t \in [0,T]$.
We assume that the value at time $t$ of the savings account $S^0$ is given by
\begin{equation*} \label{eq:savings}
S_t^0=\exp{\left\{\int_0^t r_s\ud s\right\}} < \infty
\end{equation*}
for $t\in[0,T]$, where $r=\{r_t,\ t\in [0,T]\}$ denotes the $\bF$-adapted short term interest rate. For simplicity, we suppose $r_t=r \ge 0$, for every $t \in [0,T]$. Furthermore, we assume 
that the dynamics of the primary security account processes $S^j=\{S_t^j,\ t \in [0,T]\}$, $j\in {1,2,\ldots,d}$, are given by the SDE
\begin{equation} \label{eq:dyn}
\ud S_t^j=S_t^j\left(a_t^j\ud t + \sum_{k=1}^d b_t^{j,k}\ud W_t^k\right)
\end{equation}
for $t \in [0,T]$ with $S_0^j >0$. The $j$-th appreciation rate $a^j=\{a_t^j,\ t\in[0,T]\}$ and the $(j,k)$-th volatility $b^{j,k}=\{b_t^{j,k},\ t\in [0,T]\}$
are $\bF$-predictable processes for $j,k \in \{1,2,\ldots,d\}$ satisfying suitable integrability conditions. Furthermore the volatility matrix ${\bf b}_t=[b_t^{j,k}]_{j,k=1}^{d}$ is for Lebesgue almost-every $t\in[0,T]$ assumed to be invertible. This assumption avoids redundant primary security accounts and also ensures the existence of the numéraire portfolio. By introducing the appreciation rate vector $a_t=(a_t^1,a_t^2,\ldots,a_t^d)^\top$ and the unit vector ${\bf 1}=(1,1,\ldots,1)^\top$, we obtain the {\em market price of risk} vector
\begin{align}\label{eq:mpr}
\theta_t&=(\theta_t^1,\theta_t^2,\ldots,\theta_t^d)={\bf b}_t^{-1}[a_t-r_t{\bf 1}]
\end{align}
for $t\in[0,T]$. The notion \eqref{eq:mpr} allows us to rewrite the SDE \eqref{eq:dyn} in the form
\begin{equation}\label{eq:dyn1}
\ud S_t^j=S_t^j\left\{r_t\ud t + \sum_{k=1}^d(\theta_t^k -  \sigma_t^{j,k})[\theta_t^k \ud t + \ud W_t^k]\right\}
\end{equation}
with $(j,k)$-th volatility
$$
\sigma_t^{j,k}=\theta_t^k-b_t^{j,k}
$$
for $t\in[0,T]$ and $j,k \in \{1,2,\ldots, d\}$. 
According to~\cite{ph}, Chapter 10, it is easy to check that the numéraire portfolio satisfies the SDE
\begin{equation} \label{eq:numéraire portfolio}
\ud S_t^{\delta_*}=S_t^{\delta_*}\left[r_t\ud t + \sum_{k=1}^d \theta_t^k\left(\theta_t^k\ud t + \ud W_t^k\right)\right]
\end{equation}
for $t \in [0,T]$, where we set $S_0^{\delta_*}=1$. By \eqref{eq:numéraire portfolio}, it follows that  
the risk premium of the numéraire portfolio equals the square of its volatility. Denote by $\bar S^{\delta_*}$ the discounted numéraire portfolio, i.e. 
$$
\bar S_t^{\delta_*}=\frac{S_t^{\delta_*}}{S_t^0}, \quad t \in [0,T].
$$
It is easy to check that $\bar S^{\delta_*}$ satisfies the SDE
\begin{equation} \label{eq:sde-np}
\ud \bar S_t^{\delta_*}=\ud (\hat S_t^0)^{-1}=(\hat S_t^0)^{-1} |\theta_t|(|\theta_t| \ud t+ \ud W_t)=\bar S_t^{\delta_*}|\theta_t|(|\theta_t| \ud t+ \ud W_t),
\end{equation}
where
$$
\ud W_t=\frac{1}{|\theta_t|} \sum_{k=1}^d
\theta_t^k \ud W_t^k
$$
is the stochastic differential of a standard Wiener process $W$ and $|\theta|$ denotes the total market price of risk. For the efficient
modeling of the numéraire portfolio it is important to find an appropriate parametrization. Let us parametrize the discounted
numéraire portfolio dynamics, that is the SDE \eqref{eq:sde-np}, by its trend. More precisely, we
consider the discounted numéraire portfolio drift
\begin{equation}\label{eq:drift}
\alpha_t=\bar S_t^{\delta_*}|\theta_t|^2
\end{equation}
for $t \in [0,T]$. Using this parametrization obtained from \eqref{eq:drift}, we can rewrite the SDE \eqref{eq:sde-np} of the discounted numéraire portfolio as follows:
\begin{equation}\label{eq:SDEparam}
\ud \bar S_t^{\delta_*}=\alpha_t\ud t+ \sqrt{\bar S_t^{\delta_*}\alpha_t}\ud W_t, \quad t \in [0,T].
\end{equation}
According to Section 13.4 of~\cite{ph}, we now assume that 
the discounted numéraire portfolio drift
is given by
\begin{equation}\label{eq:numéraire portfoliodrift}
\alpha_t=\left(\frac{\delta}{2}-1\right)^2 \gamma_t Z_t^{\frac{\delta - 4}{2}}, \quad t \in [0,T],
\end{equation}
where $Z$ is a a squared
Bessel process with a general dimension $\delta > 2$ satisfying the SDE
\begin{equation}\label{eq:z}
\ud Z_t = \frac{\delta}{4} \gamma_t \ud t + \sqrt{\gamma_t Z_t}\ud W_t
\end{equation}
for $t \in [0,T]$ with $Z_0 > 0$. Note that with this choice of $\alpha$, we have that 
\begin{equation} \label{def:scaling}
Z_t=\left(\bar S_t^{\delta_*}\right)^{\frac{2}{\delta - 2}},
\end{equation}
for $t \in [0,T]$ and $\delta \in (2,\infty)$. We should stress that for the standard choice $\delta = 4$ and $\gamma_t = 1$ for every $t \in [0,T]$, we recover the
stylized MMM, see Sect.13.2 of~\cite{ph}. 
Note also that for the standard case with
$\delta = 4$ the discounted numéraire portfolio drift does not depend on $Z_t$. According to~\cite{ph}, we assume here that the scaling process $\gamma$ is a nonnegative, $\bF$-adapted stochastic process that satisfies a
SDE of the form 
\begin{equation}\label{eq:scalingGAMMA}
\ud \gamma_t =a(t,\gamma_t)\ud t + b(t,\gamma_t)\left(\rho \ud W_t + \sqrt{1-\rho^2}\ud \tilde W_t \right), \quad t \in [0,T],
\end{equation}
with a random initial value $\gamma_0 > 0$. Here $\tilde W$ is a Wiener process
that models some uncertainty in trading activity and is assumed to be independent
of $W$. The scaling correlation $\rho$ is,
for simplicity, assumed to be constant. Under this formulation the dynamics of the diffusion process $\gamma$ can be chosen to match empirical evidence, see Section 13.4 of~\cite{ph} for some examples. 
Note that an equivalent risk neutral probability
measure does not exist for the above model. The benchmarked savings
account $\hat S^0$ and, thus, the candidate Radon-Nikodym derivative process
$\Lambda=\{\Lambda_t,\ t \in [0,T]\}$ with
$$
\Lambda_t=\left(\frac{\bar S_t^{\delta_*}}{\bar S_0^{\delta_*}}\right)^{-1}=\left(\frac{Z_t}{Z_0}\right)^{1-\frac{\delta}{2}}
$$
are by (8.7.24) of~\cite{ph} strict $\bF$-local martingales when we assume no correlation, that
is $\rho = 0$. Note also that $\hat S^0$ is not square-integrable, see formula (8.7.14) in~\cite{ph}. \\
We now consider an application of risk-minimization in the MMM with random scaling.
For the sake of simplicity, we refer to a generalized MMM where we can find two primary security account processes $S^1$ and $S^2$, whose dynamics are described at each time $t \in [0,T]$ by the SDE \eqref{eq:dyn}, for $j=1,2$.
We recall that the primary security account processes $S^j$, $j=1,2$, are driven by the two independent standard Wiener processes $W^1$ and $W^2$, i.e. their behavior is described by the following SDE:  
\begin{equation} \label{eq:dyn2}
\ud S_t^j=S_t^j\left(a_t^j\ud t + b_t^{j,1}\ud W_t^1 + b_t^{j,2}\ud W_t^2\right), \quad t \in [0,T],
\end{equation}
with $S_0^j >0$. By applying It\^o's formula, 
it is easy to compute the following SDE:
\begin{equation*} \label{eq:hatSj}
\ud \hat S_t^j = -\hat S_t^j|\theta_t|\ud W_t + \hat S_t^j\left(b_t^{j,1}\ud W_t^1 + b_t^{j,2}\ud W_t^2\right), \quad t \in [0,T], \ j \in \{1,2\}.
\end{equation*}
Then, $\hat S^j$ is an $\bF$-(local) martingale for both $j=1,2$. We assume that $\esp{\langle \hat S^j\rangle_T}< \infty$, for $j\in\{1,2\}$. Then by Corollary 4 on page 74 of Chapter II of~\cite{pp}, we obtain that $\esp{(\hat S_T^j)^2}=\esp{[\hat S^j]_T} < \infty$ and that $\hat S^j$ is a square-integrable $\bF$-martingale for both $j=1,2$.\\
\noindent If we now restrict our attention to the market given by the numéraire portfolio $S^{\delta_*}$ and the savings account $S^0$, the primary security account processes $S^1$ and $S^2$ cannot be perfectly replicated by investing in a portfolio containing only these two tradeable assets. 
Consider $\hat H=\hat S_T^j=\frac{S_T^j}{S_T^0 Z_T^{\frac{\delta}{2} - 1}}\in L^2(\F_T,\P)$ and let $\varphi=(\xi, \eta)=\{(\xi_t,\eta_t),\ t \in [0,T]\}$ be an $L^2$-admissible strategy such that
$$
S_T^j = \xi_T S_T^{\delta_*} + \eta_Te^{rT} = \xi_TS_T^0Z_T^{\frac{\delta}{2} - 1} + \eta_Te^{rT}, \quad j \in \{1,2\}.
$$
Then
$$
\hat H = \hat S_T^j = \xi_T+ \eta_T e^{rT}\frac{1}{S_T^0 Z_T^{\frac{\delta}{2} - 1}}=\xi_T + \eta_T \hat S_T^0, \quad j \in \{1,2\}.
$$ 
At time $t < T$, we have that
$$
\hat S_t^j = \xi_t + \eta_t\hat P_T(t,Z_t,\gamma_t),
$$
where 
\begin{equation} \label{eq:PT}
\hat P_T(t,Z_t,\gamma_t):=\condespfw{\frac{1}{S_T^{\delta_*}}}=\condespfw{\frac{1}{S_T^0 Z_T^{\frac{\delta}{2} - 1}}}
\end{equation} 
is the benchmarked fair price at time $t$ of a zero coupon bond with maturity $T$ and $\bF^W=(\F_t^W)_{0\leq t \leq T}$ denotes the natural filtration of $W$. We replicate $S^j$ by using $S^{\delta_*}$ and the money market account, or equivalently, we replicate the benchmarked primary security account $\hat S^j$ by investing in the benchmarked zero coupon bond and in $1$. In general, we do not have an explicit joint density of $(Z_T,\gamma_T)$, which we
would need to calculate the conditional expectation in \eqref{eq:PT}. However, it is possible to characterize the benchmarked fair zero coupon bond
pricing function $\hat P_T(\cdot, \cdot, \cdot)$ as the solution of a Kolmogorov backward equation and provide its description by using numerical methods for solving partial differential
equations (PDEs), see Section 15.7 in~\cite{ph}. Since $\hat P_T(\cdot, \cdot, \cdot)$ is of the form \eqref{eq:PT}, then it will admit a Brownian martingale representation, i.e.
\begin{equation*}\label{eq:martREPRE}
\hat P_T(t,Z_t,\gamma_t) = c + \int_0^t \psi_s \ud W_s, \quad t \in [0,T]
\end{equation*}
for a sufficiently integrable $\bF^W$-predictable process $\psi=\{\psi_t,\ t \in [0,T]\}$ and $c >0$. Without loss of generality, we can here assume that
\begin{equation}\label{eq:phit}
\forall t \in [0,T],\ \psi_t \neq 0 \quad \P-\mbox{a.s.}.
\end{equation}
Note that  $\hat P_T(\cdot, \cdot, \cdot)$ is also an $\bF$-martingale, i.e. for $t \in [0,T]$
\begin{align*}
\hat P_T(t,Z_t,\gamma_t) & = \condespfw{\frac{1}{S_T^0 Z_T^{\frac{\delta}{2} - 1}}} = \condespfww{\frac{1}{S_T^0 Z_T^{\frac{\delta}{2} - 1}}}\\
& = c + \int_0^t \psi_s \frac{\theta_s^1 \ud W_s^1 + \theta_s^2 \ud W_s^2}{|\theta_s|},
\end{align*}
or equivalently
\begin{equation}\label{eq:diffP}
\ud \hat P_T(t,Z_t,\gamma_t) =\psi_t \ud W_t = \frac{\psi_t \theta_t^1}{|\theta_t|}\ud W_t^1 + \frac{\psi_t \theta_t^2}{|\theta_t|}\ud W_t^2.
\end{equation}
Our aim is then to perform benchmarked risk-minimization of the benchmarked security $\hat S^j$ with respect to $\hat P_T(\cdot, \cdot, \cdot)$. Note that
$$
W_t=\int_0^t\frac{\theta_s^1}{|\theta_s|}\ud W_s^1 + \int_0^t\frac{\theta_s^2}{|\theta_s|}\ud W_s^2, \quad t \in [0,T],
$$
is an $\bF$-(local) martingale.
Since $\hat P_T(\cdot, \cdot, \cdot)$ is continuous and $\hat S_T^j$ is square-integrable, the F\"ollmer-Schweizer decomposition of $\hat S_T^j$
with respect to $\hat P_T(\cdot, \cdot, \cdot)$ is given by 
the GKW decomposition: 
\begin{equation}\label{eq:GKWsj}
\hat S_T^j = \hat S_0^j + \int_0^T \eta_s\ud \hat P_T(s,Z_s,\gamma_s) + M_T, \quad \P-{\rm a.s.}, \quad 
\ j \in \{1,2\},
\end{equation}
where $M$ is a square-integrable $\bF$-martingale strongly orthogonal to $\hat P_T(\cdot, \cdot, \cdot)$ by Theorem \ref{prop:fs} and $\eta$ is an $L^2$-admissible strategy. Since it is easy to check that 
$$
W_t^\perp:=\int_0^t\frac{\theta_s^2}{|\theta_s|}\ud W_s^1 - \int_0^t\frac{\theta_s^1}{|\theta_s|}\ud W_s^2, \quad t \in [0,T],
$$
is strongly orthogonal to $W$, we may assume that $M$ is of the form
$$
M_t=\int_0^t\nu_s\ud W_s^\perp, \quad t \in [0,T],
$$
for a suitable process $\nu$. In particular by \eqref{eq:GKWsj} we also obtain that
\begin{equation} \label{eq:GKWsj2}
\hat S_t^j = \condespf{\hat S_T^j}=\hat S_0^j + \int_0^t \eta_s\ud \hat P_T(s,Z_s,\gamma_s) + \int_0^t\nu_s\ud W_s^\perp, \quad \P-{\rm a.s.}, \quad t \in [0,T],
\end{equation}
since $\hat S^j$ is a square-integrable $\bF$-martingale for both $j=1,2$.
We now identify $(\eta,\nu)$ 
and the associated cost process.
By decomposition \eqref{eq:GKWsj2} and representation \eqref{eq:diffP}, for each $t \in [0,T]$, we get
$$
\ud \left[\hat S^j,\hat P_T(\cdot, \cdot, \cdot)\right]_t=\eta_t\ud [\hat P_T(\cdot, \cdot, \cdot)]_t= \eta_t\psi_t^2\ud t.
$$
On the other hand, taking \eqref{eq:hatSj} into account, for every $t \in [0,T]$, we have
$$
\ud \left[\hat S^j,\hat P_T(\cdot, \cdot, \cdot)\right]_t=\hat S_t^j\psi_t\left(\frac{\theta_t^1b_t^{j,1}}{|\theta_t|}+\frac{\theta_t^2b_t^{j,2}}{|\theta_t|}-|\theta_t|\right)\ud t.
$$
Then, by comparing the two relationships we define
\begin{equation} \label{eq:opt}
\eta_t=
\frac{\hat S_t^j}{\psi_t}\left(\frac{\theta_t^1b_t^{j,1}}{|\theta_t|}+\frac{\theta_t^2b_t^{j,2}}{|\theta_t|}-|\theta_t|\right)
\end{equation}
for every $t \in [0,T]$ since \eqref{eq:phit} holds. Note that the component $\eta$ is well-defined by \eqref{eq:GKWsj2}.
Analogously, if we compute the bracket process of $\hat S^j$ and $W^\perp$, for every $t \in [0,T]$ we get:
$$
\ud \left[\hat S^j,W^\perp\right]_t=\nu_t\ud[W^\perp]_t=\nu_t \ud t,
$$
and 
$$
\ud \left[\hat S^j,W^\perp\right]_t=\frac{\hat S_t^j}{|\theta_t|}\left(\theta_t^2b_t^{j,1}-\theta_t^1b_t^{j,2}\right) \ud t,
$$
from which we deduce that
\begin{equation*} 
\nu_t=\frac{\hat S_t^j}{|\theta_t|}\left(\theta_t^2b_t^{j,1}-\theta_t^1b_t^{j,2}\right), \quad t \in [0,T].
\end{equation*}
Hence, the  F\"ollmer-Schweizer decomposition for $\hat H$ is given by
$$
\hat H = \hat S_0^j + \int_0^T \eta_s\ud \hat P_T(s,Z_s,\gamma_s) + \int_0^T\nu_s\ud W_s^\perp, \quad \P-{\rm a.s.},
$$
where $\eta$ defined in \eqref{eq:opt} is the benchmarked risk-minimizing strategy and 
$$
\hat C_t= \hat S_0^j + \int_0^t\frac{\hat S_s^j}{|\theta_s|}\left(\theta_s^2b_s^{j,1}-\theta_s^1b_s^{j,2}\right)\ud W_s^\perp, \quad t \in [0,T],
$$
is the optimal benchmarked cost process.

\subsection{Benchmarked Risk-Minimization in the Stylized Minimal Market Model}

\noindent The benchmarked risk-minimizing strategy $\eta$ given in \eqref{eq:opt} strictly depends on the process $\psi$, see \eqref{eq:diffP},
that cannot be computed explicitly in the general minimal market model described above. However, if we perform the benchmarked risk-minimization for the benchmarked security $\hat S^j$ with respect to $\hat P(\cdot, \cdot, \cdot)$ under the stylized version of the MMM, i.e. where the discounted numéraire portfolio drift is an exponentially growing
function of time, it is possible to provide an explicit representation for the optimal strategy. If we take $\alpha$
to be a deterministic exponential function of time of the form 
$$
\alpha_t=\alpha_0\exp\{\beta t\}, \quad t \in [0,T],
$$
where $\alpha_0>0$ is a scaling parameter and $\beta>0$ denotes the long term net growth rate of the market, then
the stylized MMM corresponds indeed to choose in \eqref{def:scaling} $\delta=4$ and $\gamma_t=1$, for every $t \in [0,T]$, as observed previously. Suppose that $\esp{(\hat S_T^j)^2}< \infty$, for $j\in\{1,2\}$. Under the stylized version of the MMM where no dependence exists between the processes $Z$ and $\gamma$, the problem can be completely solved since the benchmarked price $\hat P(t,T)$ at time $t$ of a fair zero coupon bond with maturity $T$ 
is given by the explicit formula
\begin{equation} \label{bondprice}
\hat P(t,T)=\left(1-\exp{\left\{-(\hat S_t^0)^{-1}f(t)\right\}}\right)\hat S_t^0,
\end{equation}
where $f(t)=\frac{2\beta}{\alpha_0(\exp{\{\beta T\}}-\exp{\{\beta t\}})}$.
By applying It\^o's formula to \eqref{bondprice}, we obtain
\begin{equation*} \label{eq:bondDiff}
\ud \hat P(t,T)=-\hat P(t,T)\left(\hat S_t^0-f(t)\frac{e^{-\frac{f(t)}{\hat S_t^0}}}{1-e^{- \frac{f(t)}{\hat S_t^0}}}\right)\sqrt{(\hat S_t^0)^{-1}\alpha_t}\ud W_t, \quad t \in [0,T].
\end{equation*}
Proceeding as in the previous example and setting 
$$
\psi_t = -\hat P(t,T)\left(\hat S_t^0-f(t)\frac{e^{-\frac{f(t)}{\hat S_t^0}}}{1-e^{- \frac{f(t)}{\hat S_t^0}}}\right)\sqrt{(\hat S_t^0)^{-1}\alpha_t},
$$
for each $t \in [0,T]$, we obtain that the benchmarked risk-minimizing strategy is explicitly given by
$$
\eta_t=\frac{\hat S_t^j}{\alpha_t\hat P(t,T)}\left(|\theta_t|^2-\theta_t^1b_t^{j,1}-\theta_t^2b_t^{j,2}\right)\left(\hat S_t^0-f(t)\frac{e^{-\frac{f(t)}{\hat S_t^0}}}{1-e^{- \frac{f(t)}{\hat S_t^0}}}\right)^{-1}, \quad t \in [0,T].
$$

\subsection{Benchmarked Risk-Minimization for a Defaultable Put on an Index under the Stylized Minimal Market Model}

The numéraire portfolio $S^{\delta_*}$ can be realistically interpreted as a diversified equity index, see~\cite{ph}. Index linked variable annuities or puts on the numéraire portfolio are products that are of particular interest to pension plans. The recent financial crisis made rather clear that the event of a potential default of the issuing bank has to be taken into account. Set as before $\bF^W=(\F_t^W)_{0 \leq t \leq T}$. We now study the problem of pricing and hedging a defaultable put on the numéraire portfolio with strike $K \in \R_+$ and maturity $T \in (0,\infty)$ in the stylized MMM. Since the benchmarked payoff $\hat H$ of a put option is of the form $F(\hat S_T^0)$, for a bounded function $F$, then $\hat H \in L^2(\F_T^W,\P)$ even if $\hat S_t^0$ is not square-integrable for any $t \in [0,T]$. 
The fair default free benchmarked price $\hat p_{T,K}(t)$ at time $t$ is given by~\cite{ph} in the form 
\begin{align*}
&\hat p_{T,K}(t)\\
&=\condespfw{\frac{(K-S_T^{\delta_*})^+}{S_T^{\delta_*}}}=\condespfw{\left(K\hat S_T^0-1\right)^+}\\
&=\condespfw{\left(K\hat P(T,T)-1\right)^+}\\
&=-Z^2(d_1;4,l_2)-K\hat P(t,T)\left(Z^2(d_1;0,l_2)-\exp{\left\{-\frac{l_2}{2}\right\}}\right),
\end{align*}
where $Z^2(x;\nu,l)$ denotes the non-central chi-square distribution function with $\nu$ degrees of freedom, non-centrality parameter $l$ and which is taken at the level $x$. Here we have
$$
d_1=\frac{4 \eta K}{\alpha_t\left(\exp{\{\eta (T-t)\}}-1\right)}
$$
and
$$
l_2=\frac{2 \eta}{\alpha_t\left(\exp{\{\eta (T-t)\}}-1\right)\hat P(t,T)}.
$$
\noindent Now, we extend the stylized MMM 
to include default risk. 
Beyond the traded uncertainty given by the standard $\bF^W$-Wiener process $W$, there is also an additional source of randomness due to the presence of a possible default that, according to intensity based modeling, shall be modeled via a compensated jump process. More precisely, we assume that the random time of default $\tau$ is represented by a stopping time in the given filtration $\bF$. Let $D$ be the default process, defined as $D_t=\I_{\{\tau \leq t\}}$, for $t \in [0,T]$. We assume that $\tau$ admits an $\bF^W$-intensity, that is, there exists an $\bF^W$-adapted, nonnegative, (integrable) process $\lambda$ such that the process
\begin{equation*} \label{compens}
Q_t=D_t-\int_0^{\tau \wedge t} \lambda_s \ud s=D_t-\int_0^t \tilde\lambda_s \ud s, \quad t \in[0,T]
\end{equation*}
is a $\P$-martingale. Notice that
for the sake of brevity we have written $\tilde \lambda_t=\lambda_t\I_{\{\tau \ge t\}}$. In particular, we obtain that the existence of the intensity implies that $\tau$ is a totally inaccessible $\bF^W$-stopping time, see~\cite{dm2}, so that $\P(\tau = \tilde \tau) = 0$ for any $\bF^W$-predictable stopping time $\tilde \tau$. Furthermore, we suppose that the default time $\tau$ and the underlying Wiener process $W$, are independent.
When $\lambda$ is constant, $\tau$ is the moment of the first jump of a Poisson process. \\
Then, the benchmarked payoff of the defaultable put can be represented as follows: 
$$
\hat H=\left(K\hat S_T^0-1\right)^+\cdot \left(1+(\bar \delta-1)D_T\right),
$$
where $\bar \delta$ is supposed to be the random recovery rate. In particular, we assume that $\bar \delta$ is a random variable in $L^2(\F_T^W,\P)$ depending only on $T$ and $\tau$, i.e.
\begin{equation} \label{delta}
\bar \delta=h(\tau \wedge T),
\end{equation}
for some Borel function $h:(\R, \mathcal B(\R)) \rightarrow (\R, \mathcal B(\R))$, $0 \leq h\leq 1$. 
Here we focus on the case when an agent recovers a random part of the promised claim at maturity. Moreover, we obtain that $\hat H \in L^2(\F_T^W,\P)$. Thus, we can apply the results of Section \ref{lrm} to compute the decomposition \eqref{fsdecomp} for $\hat H$, i.e. 
the GKW decomposition of $\hat H$ with respect to 
the $\P$-local martingale $\hat P(\cdot,T)$. By applying the real world pricing formula, we obtain
the relationship
\begin{equation*}
\hat U_H(t)=\condespfw{\left(K\hat S_T^0-1\right)^+}\cdot\condespfw{1+(h(\tau \wedge T)-1)D_T}=\hat p_{T,K}(t)\cdot\Psi_t,
\end{equation*}
for all $t \in [0,T]$.  Now it only remains to compute $\Psi_t$. First we note that for each $t \in [0,T]$, we have 
\begin{align*}
\Psi_t
&=1+\condespfw{h(\tau \wedge T) D_T}-\condespfw{D_T}\\
&=1+\condespfw{h(\tau \wedge T) D_T}-\left(1-(1-F_T^\tau)\right)\tilde Q_t\\
&=\condespfw{h(\tau \wedge T) D_T}+(1-F_T^\tau)\tilde Q_t,
\end{align*}
with
$$
\tilde Q_t=\frac{1-D_t}{1-F_t^\tau}, \quad t \in [0,T],
$$
where $F^\tau$ stands for the cumulative distribution function of $\tau$. We assume that $F_t^\tau < 1$, for
every $t \in [0,T]$, so that $\tilde Q$ is well-defined.
We note that the second equality in the above derivation follows from Corollary 4.1.2 of~\cite{br}. By using the same arguments as in~\cite{bc1}, we obtain for every $t \in [0,T]$ the equation
\begin{equation} \label{psirepre}
\Psi_t=\esp{g(\tau)}+\int_{]0,t]}\left(\tilde g(s)-\frac{1-F_T^\tau}{1-F_s^\tau}\right)\ud Q_s,
\end{equation}
where the function $\tilde g:\R^+ \rightarrow \R$ is given by the formula
$$
\tilde g(t)=g(t)-e^{\int_0^t\lambda_s \ud s}\esp{\I_{\{\tau > t\}}g(\tau)},
$$
with
$$
g(x)=h(x \wedge T)\I_{\{x < T\}}.
$$
Here $h$ is the function introduced in \eqref{delta}. Moreover,
we have used the relationship
$$
\ud \tilde Q_t=-\frac{1}{1-F_t^\tau}\ud Q_t, \quad t \in [0,T],
$$
that follows from Lemma 5.1 of~\cite{br}.
Consequently, by applying It\^o's formula, we get
$$
\ud \hat U_{H}(t)=\hat p_{T,K}(t)\ud \Psi_t+\Psi_{t-}\ud \hat p_{T,K}(t),\quad t \in [0,T],
$$
and, thus
$$
\hat H=\hat p_{T,K}(0)\esp{g(\tau)}+\int_0^T\xi_s^{\hat H,0}\ud \hat P(s,T)+L_T^{\hat H},
$$
where the benchmarked risk-minimizing strategy is of the form
$$
\xi_t^{\hat H,0}=\Psi_{t-}\frac{\partial \hat p_{T,K}(t)}{\partial \hat P(t,T)}, \quad t \in [0,T],
$$
with $\xi_t^{\hat H,j}=0$, for $j \in \{1,2,\ldots,d\}$. Here the benchmarked cost appears as
$$
\hat C_t^\delta 
=\hat p_{T,K}(0)\esp{g(\tau)}+\int_{]0,t]}\hat p_{T,K}(s-)\left(\tilde g(s-)-\frac{1-F_T^\tau}{1-F_{s-}^\tau}\right)\ud Q_s,
$$
for every $t \in [0,T]$, see \eqref{psirepre}. Due to the boundness of the hedge ratio $\frac{\partial \hat p_{T,K}(t)}{\partial \hat P(t,T)}$, $\bar \delta$ and the process $\Psi$, the benchmarked hedgeable part of the contingent claim forms a square-integrable $\P$-martingale and the resulting strategy is $L^2$-admissible.

\begin{center}
{\bf Acknowledgement}
\end{center}

\noindent The authors like to thank Martin Schweizer and Wolfgang Runggaldier for valuable discussions on the manuscript. \\
The research leading to these results has received funding from the European
Research Council under the European Community's Seventh Framework Programme (FP7/2007-2013) / ERC grant agreement no [228087].

\appendix



\section{Technical Proofs} \label{tecpr}

\noindent Here we extend the result of Lemma 2.3 of~\cite{s01} to the case of a general discounting factor. Our proof is similar to the one of Lemma 2.3 of~\cite{s01}, however it contains some differences due to the fact that all the strategy's components contribute to the cost.\\

\noindent {\em Proof of Lemma \ref{lem:msf}.} 
Suppose $\delta$ is a benchmarked risk-minimizing strategy. Fix $t_0 \in [0,T]$ and define a strategy $\tilde \delta$ by setting for each $t \in [0,T]$
$$
\tilde \delta_t:=\delta_t\I_{[0,t_0)}(t)+\eta_t\I_{[t_0,T]}(t),
$$
where $\eta$ is an $\bF$-predictable process determined in a way such that the resulting strategy $\tilde \delta$ is $L^2$-admissible and 
$$
\hat S_t^{\tilde \delta}=\tilde \delta_t \cdot \hat S_t:=\hat S_t^{\delta}\I_{[0,t_0)}(t)+\condespf{\hat S_T^\delta - \int_t^T\delta_s\cdot \ud \hat S_s}\I_{[t_0,T]}(t).
$$
Here we assume to work with an RCLL version. Then $\tilde \delta$ is an $L^2$-admissible strategy with $\hat S_T^{\delta}=\hat S_T^{\tilde \delta}$ and
\begin{equation}\label{eq:cost}
\hat C_{t_0}^{\tilde \delta}=\mathbb E[\hat C_T^{\delta}|\F_{t_0}].
\end{equation}
Since $\hat C_T^\delta = \hat C_T^{\tilde \delta} + \int_{t_0}^T(\eta_u-\delta_u)\cdot \ud \hat S_u$, we have
$$
\hat C_T^\delta-\hat C_{t_0}^\delta=\hat C_T^{\tilde \delta}-\hat C_{t_0}^{\tilde \delta}+\mathbb E[\hat C_T^{\delta}|\F_{t_0}]-\hat C_{t_0}^{\delta} + \int_{t_0}^T(\eta_u-\delta_u)\cdot \ud \hat S_u.
$$
Taking the squares of both sides of the equation, we have
\begin{equation*}
\begin{split}
& \left(\hat C_T^\delta-\hat C_{t_0}^\delta\right)^2=\left(\hat C_T^{\tilde \delta}-\hat C_{t_0}^{\tilde \delta}\right)^2 + \left(\mathbb E[\hat C_T^{\delta}|\F_{t_0}]-\hat C_{t_0}^{\delta}\right)^2 + \left(\int_{t_0}^T(\eta_u-\delta_u)\cdot \ud \hat S_u\right)^2\\
& \qquad + 2\left(\hat C_T^{\tilde \delta}-\hat C_{t_0}^{\tilde \delta}\right)\left(\mathbb E[\hat C_T^{\delta}|\F_{t_0}]-\hat C_{t_0}^{\delta}\right)+2\left(\hat C_T^{\tilde \delta}-\hat C_{t_0}^{\tilde \delta}\right) \int_{t_0}^T(\eta_u-\delta_u)\cdot \ud \hat S_u\\
& \qquad \qquad +2\int_{t_0}^T(\eta_u-\delta_u)\cdot \ud \hat S_u\left(\mathbb E[\hat C_T^{\delta}|\F_{t_0}]-\hat C_{t_0}^{\delta}\right).
\end{split}
\end{equation*}
Then conditioning with respect to $\F_{t_0}$, by \eqref{eq:cost} we obtain
\begin{align*}
\hat R_{t_0}^\delta & = \hat R_{t_0}^{\tilde \delta} + \left(\mathbb E[\hat C_T^{\delta}|\F_{t_0}]-\hat C_{t_0}^{\delta}\right)^2 + \mathbb E \left[\left(\int_{t_0}^T(\eta_u-\delta_u)\cdot \ud \hat S_u\right)^2\Big{|}\F_{t_0}\right]\\
& \quad + 2\mathbb E \left[\left(\hat C_T^{\tilde \delta}-\hat C_{t_0}^{\tilde \delta}\right)\left(\mathbb E[\hat C_T^{\delta}|\F_{t_0}]-\hat C_{t_0}^{\delta}\right)\Big{|}\F_{t_0}\right]\\
& \quad \quad + 2\mathbb E \left[\left(\hat C_T^{\tilde \delta}-\hat C_{t_0}^{\tilde \delta}\right) \int_{t_0}^T(\eta_u-\delta_u)\cdot \ud \hat S_u\Big{|}\F_{t_0}\right]\\
&=\hat R_{t_0}^{\tilde \delta} + \left(\mathbb E[\hat C_T^{\delta}|\F_{t_0}]-\hat C_{t_0}^{\delta}\right)^2 + \mathbb E \left[\left(\int_{t_0}^T(\eta_u-\delta_u)\cdot \ud \hat S_u\right)^2\Big{|}\F_{t_0}\right]\\
& \quad + 2\left(\mathbb E[\hat C_T^{\delta}|\F_{t_0}]-\hat C_{t_0}^{\delta}\right)\cdot\mathbb E\Big[\underbrace{\hat C_T^{\tilde \delta}+\int_{t_0}^T(\eta_u-\delta_u)\cdot \ud \hat S_u}_{=\hat C_T^\delta}-\hat C_{t_0}^{\tilde \delta}\Big{|}\F_{t_0}\Big]\\
& \quad \quad + 2\mathbb E \left[\left(\hat C_T^{\tilde \delta}-\hat C_{t_0}^{\tilde \delta}\right) \int_{t_0}^T(\eta_u-\delta_u)\cdot \ud \hat S_u\Big{|}\F_{t_0}\right]\\
&=\hat R_{t_0}^{\tilde \delta} + \left(\mathbb E[\hat C_T^{\delta}|\F_{t_0}]-\hat C_{t_0}^{\delta}\right)^2+ \mathbb E \left[\left(\int_{t_0}^T(\eta_u-\delta_u)\cdot \ud \hat S_u\right)^2\Big{|}\F_{t_0}\right]\\
& \quad +  2\mathbb E \left[\left(\hat C_T^{\tilde \delta}-\hat C_{t_0}^{\tilde \delta}\right) \int_{t_0}^T(\eta_u-\delta_u)\cdot \ud \hat S_u\Big{|}\F_{t_0}\right].
\end{align*}
Because $\delta$ is benchmarked risk-minimizing, it has minimal risk.
If $\tilde \delta$ is also risk-minimizing, we must have
$$
\hat R_{t_0}^{\delta} = \hat R_{t_0}^{\tilde \delta}, \quad t \in [0,T]
$$
and
$$
\mathbb E \left[\left(\hat C_T^{\tilde \delta}-\hat C_{t_0}^{\tilde \delta}\right) \int_{t_0}^T(\eta_u-\delta_u)\cdot \ud \hat S_u\Big{|}\F_{t_0}\right]=0,
$$ 
since the residual optimal cost $\hat C_T^{\tilde \delta}-\hat C_{t_0}^{\tilde \delta}$ must be orthogonal to all integrals of the form $\int_{t_0}^T\xi_u \ud \hat S_u$, with $\xi$ $L^2$-admissible, by definition (i.e. by Definitions \ref{def:risk} and \ref{optimalstrategy}). Consequently, we obtain
$$
\left(\mathbb E[\hat C_T^{\delta}|\F_{t_0}]-\hat C_{t_0}^{\delta}\right)^2+ \mathbb E \left[\left(\int_{t_0}^T(\eta_u-\delta_u)\cdot \ud \hat S_u\right)^2\Big{|}\F_{t_0}\right]=0
$$
and we can easily conclude that
$$
\hat C_{t_0}^{\delta}=\mathbb E[\hat C_T^{\delta}|\F_{t_0}]\quad \P-{\rm a.s.}.
$$
Since $t_0$ is arbitrary, the assertion follows.

\begin{flushright}
$\square$
\end{flushright}

\section{Some Useful Definitions}

\noindent We recall briefly the definition of $\bF$-predictable projection of a measurable process endowed with some suitable integrability properties and the definition of $\bF$-predictable dual projection of a raw integrable increasing process.
\begin{theorem}[predictable projection]
Let $X$ be a measurable process either positive
or bounded. There exists an $\bF$-predictable process $Y$ such that
$$
\condespff{X_\tau \I_{\{\tau < \infty\}}}=Y_\tau \I_{\{\tau < \infty\}}\quad \P-{\rm a.s.}
$$
for every predictable $\bF$-stopping time $\tau$.
\end{theorem}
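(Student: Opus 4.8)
The plan is to recognize the statement as the classical existence theorem for the $\bF$-predictable projection, see~\cite{dm2}, and to establish it by combining the predictable section theorem with a functional monotone class argument. Before constructing $Y$ I would record the companion uniqueness assertion, since it is what keeps the construction coherent: if $Y$ and $Y'$ are both $\bF$-predictable and satisfy the defining identity at every predictable $\bF$-stopping time, then $\{Y\neq Y'\}$ is a predictable set met by no predictable stopping time, hence evanescent by the predictable section theorem. Thus the predictable projection, once it exists, is unique up to indistinguishability, and I may write $Y={}^pX$.

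For existence I would first reduce, by decomposing $X=X^+-X^-$ and truncating, to the case of a bounded nonnegative measurable $X$, deferring the passage to general positive $X$ to the very end. The heart of the argument is a monotone class argument on $\mathcal B([0,T])\otimes\F$, which is generated by the multiplicative class of processes $X_t(\omega)=\I_B(t)\,\xi(\omega)$ with $B\subseteq[0,T]$ Borel and $\xi$ bounded $\F$-measurable. For such a generator I can exhibit the projection explicitly: letting $M$ be a c\`adl\`ag version of the $(\bF,\P)$-martingale $M_t=\condespf{\xi}$, set $Y_t:=\I_B(t)\,M_{t-}$. This $Y$ is $\bF$-predictable, being the product of the deterministic Borel process $\I_B$ with the left-limit process $M_{-}$ of a c\`adl\`ag adapted process. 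For a predictable $\bF$-stopping time $\tau$ with announcing sequence $\tau_n\uparrow\tau$ one has $\F_{\tau-}=\bigvee_n\F_{\tau_n}$, so that $M_{\tau-}=\lim_n M_{\tau_n}=\lim_n\condespf[\F_{\tau_n}]{\xi}=\condespff{\xi}$ by L\'evy's upward theorem; multiplying by the $\F_{\tau-}$-measurable factor $\I_B(\tau)$ yields $Y_\tau\I_{\{\tau<\infty\}}=\condespff{X_\tau\I_{\{\tau<\infty\}}}$, which is the defining property.

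Next I would let $\mathcal H$ be the set of bounded measurable processes admitting an $\bF$-predictable projection. It is a vector space containing the constants and the multiplicative generators above. The only delicate closure property is stability under bounded increasing limits: if $X^n\uparrow X$ with $0\le X^n\le c$ and $Y^n={}^pX^n$, the uniqueness step lets me choose the $Y^n$ increasing off an evanescent set, so that $Y:=\sup_n Y^n$ is $\bF$-predictable, and the conditional monotone convergence theorem applied at each predictable $\tau$ gives $Y_\tau\I_{\{\tau<\infty\}}=\lim_n\condespff{X^n_\tau\I_{\{\tau<\infty\}}}=\condespff{X_\tau\I_{\{\tau<\infty\}}}$, i.e.\ $Y={}^pX$. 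The functional monotone class theorem then forces $\mathcal H$ to contain every bounded $\mathcal B([0,T])\otimes\F$-measurable process, and a final monotone passage $X\wedge n\uparrow X$ extends the conclusion to every positive measurable $X$.

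The main obstacle is not a computation but the reliance on two structural inputs of the general theory. First, the predictable section theorem underlies both the uniqueness step and the very interpretation of the defining identity as a statement tested against all predictable stopping times; without it one cannot pin down $Y$ off an evanescent set. Second, the interchange of the predictable projection with monotone limits is precisely where uniqueness must be invoked to select coherent increasing versions before the conditional monotone convergence theorem applies. Once these two points are secured, every remaining step is routine.
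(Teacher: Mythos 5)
Your proof is correct, and it is essentially the paper's approach: the paper gives no argument of its own but simply cites~\cite{dm2} and~\cite{ry}, and what you have written is precisely the classical proof from those references --- the multiplicative class $\I_B(t)\,\xi(\omega)$ handled via $Y_t=\I_B(t)M_{t-}$ with $M_{\tau-}=\condespff{\xi}$ along an announcing sequence, the functional monotone class theorem with increasing versions selected through the uniqueness/predictable-section step, and a final truncation for unbounded positive $X$. No gaps worth flagging beyond the routine conventions (e.g.\ $M_{0-}=M_0$ for predictable times that may vanish), which your argument implicitly accommodates.
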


\begin{proof}
See~\cite{dm2} or~\cite{ry} for the proof. 
\end{proof}

\begin{definition}
Let $A$ be a raw integrable increasing process. The $\bF$-predictable dual projection of $A$ is the $\bF$-predictable increasing process $B$ defined by
$$
\esp{\int_{[0,\infty[}X_s \ud B_s}=\esp{\int_{[0,\infty[}{}^pX_s \ud A_s}.
$$
\end{definition}
\noindent For a further discussion on this issue, see e.g.~\cite{dm2}.


\end{document}